\newtheorem{claim}{Claim}
\newtheorem{proposition}{Proposition}
\title{Faster Exact and Parameterized Algorithm for Feedback Vertex Set in Bipartite Tournaments}
\titlerunning{Faster Exact and Parameterized Algorithm for Feedback Vertex Set in Tournaments} %optional, in case that the title is too long; the running title should fit into the top page column
\author[1]{Mithilesh Kumar}
\author[2]{Daniel Lokshtanov}
\affil[1]{Department of Informatics, University of Bergen\\
  Norway\\
  \texttt{Mithilesh.Kumar@ii.uib.no}}
\affil[2]{Department of Informatics, University of Bergen\\
  Norway\\
  \texttt{daniello@ii.uib.no}}
\authorrunning{M. Kumar and D. Lokshtanov} %mandatory. First: Use abbreviated first/middle names. Second (only in severe cases): Use first author plus 'et. al.'
\subjclass{F.2.2 Nonnumerical Algorithms and Problems}% mandatory: Please choose ACM 1998 classifications from http://www.acm.org/about/class/ccs98-html . E.g., cite as "F.1.1 Models of Computation".
\keywords{Parameterized algorithms, Exact algorithms, Feedback vertex set, Tournaments, Bipartite tournaments}% mandatory: Please provide 1-5 keywords
\begin{document}

\maketitle
%%%%%%%%%%%%%%%%%%%%%%%%%%%%%%%%%%%%%%%%%%%%%%%%%%%%%%%%
%%%%%%%%%%%%%%%%%%%%%%%%%%%%%%%%%%%%%%%%%%%%%%%%%%%%%%%%
%%%%%%%%%%%%%%%%%%%%%%%%%%%%%%%%%%%%%%%%%%%%%%%%%%%%%%%%
\begin{abstract}

A {\em bipartite tournament} is a directed graph $T:=(A \cup B, E)$ such that every pair of vertices $(a,b), a\in A,b\in B$ are connected by an arc, and no arc connects two vertices of $A$ or two vertices of $B$. A {\em feedback vertex set} is a set $S$ of vertices in $T$ such that $T - S$ is acyclic. In this article we consider the {\sc Feedback Vertex Set} problem in bipartite tournaments. Here the input is a bipartite tournament $T$ on $n$ vertices together with an integer $k$, and the task is to determine whether $T$ has a feedback vertex set of size at most $k$. We give a new algorithm for {\sc Feedback Vertex Set in Bipartite Tournaments}. The running time of our algorithm is upper-bounded by $O(1.6181^k + n^{O(1)})$, improving over the previously best known algorithm with running time $2^kk^{O(1)} + n^{O(1)}$ [Hsiao, ISAAC 2011]. As a by-product, we also obtain the fastest currently known exact exponential-time algorithm for the problem, with running time $O(1.3820^n)$. 
\end{abstract}
%------------------------------------------------------------
%%%%%%%%%%%%%%%%%%%%%%%%%%%%%%%%%%%%%%%%%%%%%%%%%%%%%%%%
%%%%%%%%%%%%%%%%%%%%%%%%%%%%%%%%%%%%%%%%%%%%%%%%%%%%%%%%
%%%%%%%%%%%%%%%%%%%%%%%%%%%%%%%%%%%%%%%%%%%%%%%%%%%%%%%%
\section{Introduction}

A {\em feedback vertex set} in a graph $G$ is a vertex set whose removal makes the graph acyclic. The {\sc Feedback Vertex Set} problem is a well-studied graph problem where input is a graph $G$ (directed or undirected) and the task is to find a smallest possible feedback vertex set. Finding such an optimal feedback vertex set turns out to be \textsc{NP}-complete~\cite{GJ79}, indeed the problem is one of the very first to be shown NP-complete in the influential paper of Karp~\cite{Karp72}.
%not only on general undirected and directed graphs 
Since, polynomial time algorithms are highly unlikely, {\sc Feedback Vertex Set} on general directed and undirected graphs has been extensively studied from the perspective of approximation algorithms~\cite{BafnaBF99,EvenNSS98}, parameterized algorithms~\cite{ChenLLOR08, CyganNPPRW11, KociumakaP14}, exact exponential-time algorithms~\cite{Razgon07,XiaoN15} as well as graph theory~\cite{erdHos1965independent,reed1996packing}. 

This paper belongs to a long line of work studying the complexity of {\sc Feedback Vertex Set} on restricted classes of graphs. On one hand {\sc Feedback Vertex Set} remains NP-complete on tournaments and bipartite tournaments \cite{CaiDZ02}, planar undirected graphs~\cite{GJ79}, planar directed graphs with in-degree and out-degree at most $3$~\cite{GJ79} as well as directed graphs with in-degree and out-degree at most $2$~\cite{GJ79}. On the other hand the problem is polynomial time solvable on undirected graphs of maximum degree $3$~\cite{UenoKG88}, chordal graphs~\cite{festa1999feedback} and weakly chordal graphs~\cite{FominV10}, indeed on any class of graphs with polynomially many potential maximal cliques~\cite{FominV10}. 
Being a problem of fundamental importance, {\sc Feedback Vertex Set} has been approached algorithmically even on the classes of graphs where it remains NP-complete. For example the problem admits (efficient) polynomial time approximation schemes~\cite{Cohen-AddadVKMM16,DemaineH05,FominLRS11}, sub-exponential time parameterized algorithms~\cite{DemaineFHT05} and linear kernels~\cite{FominLST10} on classes of graphs excluding a fixed graph $H$ as a minor. In this paper we study the problem on {\em bipartite tournaments}. 

A {\em tournament} is a subclass of directed graphs where every pair of vertices are connected by an arc. A {\em bipartite tournament} is a directed graph where the vertices are partitioned into two sets $A$ and $B$, there is an arc connecting every vertex in $A$ with every vertex in $B$, and there are no edges between vertices of $A$ and vertices of $B$. Tournaments arise naturally from round-robin competitions whereas bipartite tournaments model a two-team competition in which every player in one team plays against every player of the other team. Here arcs are drawn from the winning to the losing player, and often one seeks to rank the players from ``best'' to ``worst'' such that players that appear higher in the ranking beat all lower ranked players they played against. Such an absolute ranking possible only if there are no cycles in the tournament. The size of the smallest feedback vertex set then becomes a measure of how far the tournament is from admitting a consistent ranking. For this reason the structure of cycles and feedback vertex sets in (bipartite) tournaments has been studied both from the perspective of graph theory~\cite{BEINEKE1982140,CLAPHAM1985195,Harary} and algorithms.

For bipartite tournaments, finding a feedback vertex set reduces to hitting all cycles of length $4$. For this reason the {\sc Feedback Vertex Set} problem is more computationally tractable on bipartite tournaments than on general directed graphs. Specifically the best known approximation algorithm for {\sc Feedback Vertex Set} on directed graphs has an approximation factor of $O(\log n \cdot \log\log n)$~\cite{EvenNSS98}, and the problem does {\em not} admit a constant factor approximation assuming the Unique Games Conjecture~\cite{GuruswamiHMRC11}. On bipartite tournaments it is easy to obtain a $4$-approximation (see Lemma~\ref{approximation}). Further, an improved approximation algorithm with ratio $2$ was obtained by Zuylen.~\cite{Zuylen11}. 

Similarly, it was open for a long time whether {\sc Feedback Vertex Set} on general directed graphs admits an FPT algorithm, that is an algorithm that determines whether there exists a solution of size at most $k$ in time $f(k)n^{O(1)}$. In 2008, Chen et al. \cite{ChenLLOR08} gave an algorithm with running time $O(4^kk^{O(1)}k!nm)$, and it is an outstanding open problem whether there exists an algorithm with running time $2^{O(k)}n^{O(1)}$. For bipartite tournaments, the realization that it is necessary and sufficient to hit all cycles of length $4$ yields a simple $4^kn^{O(1)}$ time parameterized algorithm: recursively branch on vertices of a cycle of length $4$. Tru{\ss}~\cite{truss2005parameterized} gave an improved algorithm with running time $3.12^kn^{O(1)}$, Sasatte \cite{Sasatte08} further improved the running time to $3^kn^{O(1)}$, while Hsiao \cite{Hsiao11} gave an algorithm with running time $2^kn^{O(1)}$. Prior to this work, this was the fastest known parameterized algorithm for {\sc Feedback Vertex Set} on bipartite tournaments. Our main result is an algorithm with running time $O(1.6181^k + n^{O(1)})$. Using the recent black-box reduction from parameterized to exact exponential time algorithms of Fomin et al.~\cite{FominGLS16} we also obtain an exponential-time algorithm running in $O(1.3820^n)$ time.

\smallskip
\noindent
{\bf Methods.} 
Our algorithm is based on the recent parameterized algorithm with running time  $O(1.6181^k + n^{O(1)})$ by the authors~\cite{KumarL16} for {\sc Feedback Vertex Set} in tournaments.  The main idea of this algorithm is that tournaments are very {\em rigid}.  Given as input a tournament $T$, by obtaining a large set $M$ of vertices that is {\em disjoint} from the feedback vertex set $H$ sought for, we can get a rough sketch of the rigid structure of $T - H$. This structure is then very useful for recovering the solution $H$. Indeed, the only way that vertices that are ``far apart'' in the approximate sketch of the structure of $T - H$ can interact with each other is by being ``in conflict''. Out of two vertices that are in conflict, one of them has to be deleted. Thus, dealing with conflicts can be done in a similar fashion as with edges in the {\sc Vertex Cover} problem. For any vertex $v$ appearing in at least two conflicts, branch into two sub-problems. In the first sub-problem $v$ is deleted, in the second all vertices in conflict with $v$ are deleted. If there are {\em no} conflicts it is sufficient to solve the  {\sc Feedback Vertex Set}  problem ``locally''. If every vertex appears in at most one conflict a divide and conquer approach can be taken.

Because bipartite tournaments are also quite ``rigid'', we expected that the same approach would easily give an algorithm for {\sc Feedback Vertex Set} on bipartite tournaments with the same running time. Our expectations were both wrong and correct; indeed we {\em do} obtain an algorithm for  {\sc Feedback Vertex Set} on bipartite tournaments with the same template and the same running time as the algorithm for tournaments~\cite{KumarL16}, yet the adaptation turned out to be anything but easy. Specifically, in virtually every step of the algorithm, the lack of a unique topological sort of acyclic bipartite tournaments presented significant challenges.

The fact that these challenges still could be overcome by sub-exponential time cleaning procedures gives hope that the same template could be applicable in several situations where one seeks a ``small'' set of vertices or edges to delete in order to modify the input graph to a ``rigid'' structure; such as {\sc Cluster Vertex Deletion}, {\sc Cograph Vertex Deletion} and {\sc Feedback Vertex Set} in the more general setting when the input graph is a multi-partite tournament~\cite{GutinY08}.

\smallskip
\noindent
{\bf Organization of the paper.} In Section \ref{pre} we set up definitions and notation, and state a few useful preliminary results. The standard graph notation and parameterized complexity terminology is set up in the appendix. In Section~\ref{M} we define and prove some properties of $M$-sequence.  In Section~\ref{family} we define and give an algorithm for Constrained Feedback Vertex Set problem.  

%%%%%%%%%%%%%%%%%%%%%%%%%%%%%%%%%%%%%%%%%%%%%%%%%%%%%%%%
%%%%%%%%%%%%%%%%%%%%%%%%%%%%%%%%%%%%%%%%%%%%%%%%%%%%%%%%
%%%%%%%%%%%%%%%%%%%%%%%%%%%%%%%%%%%%%%%%%%%%%%%%%%%%%%%%
\section{Preliminaries}\label{pre}

\smallskip
\noindent
{\bf Preliminary Results.}
If a bipartite tournament is acyclic then it does not contain any squares. It is a well-known and basic fact that the converse is also true, see e.g.~\cite{Dom201076}.
\begin{lemma}\label{triangle}\cite{Dom201076}
A bipartite tournament is acyclic if and only if it contains no squares.\end{lemma}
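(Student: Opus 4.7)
The forward direction is immediate: a square is itself a directed cycle, so if $T$ is acyclic it contains no squares. The content is in the converse, so the plan is to assume $T$ has a directed cycle and exhibit a square. My approach would be to pick a shortest directed cycle and argue by chord-shortening that it must already have length four.

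More concretely, let $C = v_0 \to v_1 \to \cdots \to v_{\ell-1} \to v_0$ be a directed cycle of minimum length in $T$. Because $T$ is bipartite the vertices of $C$ alternate between $A$ and $B$, so $\ell$ is even; and because $T$ is a bipartite tournament it has no loops or $2$-cycles, so $\ell \geq 4$. If $\ell = 4$ we are done, since $C$ is itself a square. So the only case to handle is $\ell \geq 6$, which I would dispatch by contradiction.

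The key observation is that since $\ell \geq 6$, the vertices $v_0$ and $v_3$ exist, their indices have opposite parity, so they lie on opposite sides of the bipartition, and hence by the bipartite-tournament property there must be an arc between them. Now I would split on the orientation of this chord. If $v_0 \to v_3$, then $v_0 \to v_3 \to v_4 \to \cdots \to v_{\ell-1} \to v_0$ is a directed cycle of length $\ell - 2 \geq 4$, strictly shorter than $C$. If instead $v_3 \to v_0$, then $v_0 \to v_1 \to v_2 \to v_3 \to v_0$ is a directed cycle of length $4 < \ell$. In either case the minimality of $C$ is violated, yielding the contradiction and completing the proof; in fact, in the second case the resulting $4$-cycle is already the desired square.

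I do not anticipate a real obstacle here; the only things to verify carefully are the parity of $\ell$ and the existence of the chord between $v_0$ and $v_3$, both of which are direct consequences of the bipartite-tournament definition. The mild subtlety is just to make sure the shortened cycle in the first case actually has length at least $4$, which is why the boundary $\ell = 4$ is handled separately before the chord argument is invoked.
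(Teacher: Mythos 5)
The paper does not prove this lemma at all --- it is stated as a known fact with a citation to Dom et al., so there is no in-paper argument to compare against. Your shortest-cycle chord argument is correct and complete: parity forces $\ell$ even, the absence of $2$-cycles in a bipartite tournament forces $\ell \geq 4$, and for $\ell \geq 6$ the guaranteed arc between $v_0$ and $v_3$ shortens the cycle in either orientation, contradicting minimality. This is the standard proof of the cited fact and nothing further is needed.
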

Lemma~\ref{triangle} immediately gives rise to a folklore greedy $4$-approximation algorithm for \textsf{BTFVS}: as long as $T$ contains a square, delete all the vertices in this square.

\begin{lemma}[folklore]\label{approximation}
There is a polynomial time algorithm that given as input a bipartite tournament $T$ and integer $k$, either correctly concludes that $T$ has no feedback vertex set of size at most $k$ or outputs a feedback vertex set of size at most $4k$. 
\end{lemma}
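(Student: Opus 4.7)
The plan is to follow the greedy idea suggested by the paragraph preceding the lemma: iteratively extract squares, delete all four of their vertices, and place them into a candidate solution $S$ until the remaining bipartite tournament contains no square.

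In more detail, I would maintain a set $S$, initially empty, and a working bipartite tournament $T'$, initially equal to $T$. While $T'$ contains a square, find one (this can be done in polynomial time by, for example, enumerating pairs $(a,a') \in A \times A$ and checking whether they have at least two common out-neighbors and two common in-neighbors in $B$, or some combination giving a directed $4$-cycle; brute force over $4$-tuples also suffices for a polynomial bound). Let the vertices of this square be $v_1,v_2,v_3,v_4$. Add all four of them to $S$ and delete them from $T'$. When the loop terminates, $T'$ contains no square and is therefore acyclic by Lemma~\ref{triangle}, so $S$ is a feedback vertex set of $T$. If $|S| \leq 4k$, output $S$; otherwise report that $T$ has no feedback vertex set of size at most $k$.

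For correctness of the second branch, I would argue as follows. Let $C_1,\ldots,C_t$ be the squares extracted during the execution, listed in the order they were found. By construction, the vertices of $C_i$ are removed from $T'$ before $C_{i+1}$ is considered, so the squares $C_1,\ldots,C_t$ are pairwise vertex-disjoint. Moreover, each $C_i$ is a square of the original tournament $T$ (since deleting vertices cannot create new squares). Any feedback vertex set $H$ of $T$ must intersect every cycle of $T$, in particular every square, and by disjointness $|H| \geq t$. Since $|S| = 4t$, if $|S| > 4k$ then $t > k$ and hence $H$ has size greater than $k$, justifying the negative answer.

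The only step that requires a moment of thought is the polynomial-time detection of a square, but this is entirely routine (checking $O(n^4)$ quadruples, or the slightly smarter common-neighbor approach, suffices). The rest is a standard disjoint-packing argument giving a $4$-approximation, analogous to the classical maximal-matching $2$-approximation for \textsc{Vertex Cover}.
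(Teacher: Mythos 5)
Your proof is correct and follows exactly the folklore greedy square-packing argument that the paper itself sketches in the sentence preceding the lemma (repeatedly delete all four vertices of a square, use disjointness of the extracted squares to lower-bound any feedback vertex set, and invoke Lemma~\ref{triangle} for termination and correctness). No discrepancies to report.
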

In fact, \textsf{BTFVS} has a polynomial time factor $3.5$-approximation, due to Cai et al.~\cite{CaiDZ00}. However, the simpler algorithm from Lemma~\ref{approximation} is already suitable to our needs.
The preliminary phase of our algorithm for \textsf{BTFVS} is the kernel of Dom et al.~\cite{Dom201076}. We will need some additional properties of this kernel that we state here. Essentially, Lemma~\ref{kernel} allows us to focus on the case when the number of vertices in the input bipartite tournament is $O(k^3)$.

\begin{lemma}\label{kernel}\cite{Dom201076} There is a polynomial time algorithm that given as input a bipartite tournament $T$ and integer $k$, runs in polynomial time and outputs a bipartite tournament $T'$ and integer $k'$ such that $|V(T')| \leq |V(T)|$, $|V(T')| = O(k^3)$, $k' \leq k$, and $T'$ has a feedback vertex set of size at most $k'$ if and only if $T$ has a feedback vertex set of size at most $k$.
\end{lemma}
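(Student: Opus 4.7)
The plan is to start by invoking Lemma~\ref{approximation} on $(T,k)$. If the approximation algorithm concludes that $T$ has no feedback vertex set of size at most $k$, we return a trivial constant-size no-instance. Otherwise we obtain a set $S \subseteq V(T)$ with $|S| \leq 4k$ such that $T - S$ is acyclic, and therefore square-free by Lemma~\ref{triangle}. The remaining task is to reduce $|V(T) \setminus S|$ to $O(k^3)$ by a sequence of polynomial-time reduction rules, each of which preserves the existence of a feedback vertex set of size at most $k$ and never increases $|V(T)|$.

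First I would apply the basic rule that deletes any vertex which lies in no square of $T$: by Lemma~\ref{triangle} such a vertex lies on no cycle, so it can safely be left outside any solution. After this rule is exhaustive, every vertex $v \in V(T) \setminus S$ participates in at least one square, and that square must contain at least one vertex of $S$ since $T - S$ is square-free. The next step is a twin/crown-style rule: whenever the set of vertices in $V(T) \setminus S$ sharing an identical ``square-pattern'' with respect to some small tuple of vertices of $S$ becomes too large (of order more than $k$), we either force one of the $S$-vertices into the solution by a sunflower-type argument (decrementing $k$ and deleting that vertex) or discard all but a bounded number of the twins.

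Once all reduction rules are exhaustive, every remaining vertex of $V(T) \setminus S$ can be charged, up to a constant multiplicity, to a triple of elements of $S$ that co-spans a square with it; summing over all $O(|S|^3) = O(k^3)$ triples yields $|V(T)| = O(k^3)$. Each rule is testable in polynomial time and each application strictly decreases $|V(T)| + k$, so the whole procedure runs in polynomial time.

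The main obstacle will be formulating the twin/crown rule so that it is simultaneously sound, exhaustively applicable in polynomial time, and strong enough to force the cubic bound. The difficulty, as the authors emphasize later in the paper, is that acyclic bipartite tournaments admit many topological sorts; two vertices that look interchangeable with respect to the approximate solution $S$ may play quite different roles with respect to a different feedback vertex set of size at most $k$, and the soundness argument has to work against all such sets, not just against $S$.
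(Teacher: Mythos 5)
First, a point of calibration: the paper does not prove Lemma~\ref{kernel} at all --- it is imported verbatim from Dom et al.~\cite{Dom201076} --- so there is no in-paper argument to compare yours against; your sketch has to stand on its own. Judged that way, it has a genuine gap exactly where you flag one, and the gap propagates into your final counting. Your opening move is fine (modulo a small point: Lemma~\ref{triangle} alone does not say that a vertex in no square is on no cycle; you additionally need the easy observation that a shortest cycle through any vertex of a bipartite tournament has length $4$, so that ``in no square'' really does imply ``in no cycle''). But after that rule is exhausted, all you know about a surviving vertex $v\notin S$ is that it lies in \emph{some} square containing \emph{at least one} vertex of $S$; the other two or three vertices of that square may lie outside $S$. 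Consequently the concluding step --- charging each surviving vertex, with constant multiplicity, to a triple of $S$-vertices that co-spans a square with it --- has no basis: nothing you established produces, for each $v$, a square whose other three vertices all lie in $S$, and without that the $O(|S|^3)=O(k^3)$ count does not follow. If each $v$ is only guaranteed a square with one $S$-vertex, the natural charge is to a single element of $S$, which bounds nothing.

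The middle step is also not yet an argument. A sunflower with more than $k$ petals in the $4$-uniform hitting-set instance of squares does not let you ``force one of the $S$-vertices into the solution'': it only shows that every solution of size at most $k$ must hit the \emph{core}, which may consist of two or three vertices, and the standard consequence is to replace the petals by the core as a new constraint --- a set that is no longer a square of $T$ and hence does not translate into a vertex deletion or a decrement of $k$. Moreover, generic sunflower kernelization of $4$-Hitting Set yields only $O(k^4)$ sets and so cannot by itself deliver the cubic bound; obtaining $O(k^3)$ is exactly where one must exploit the order structure of the acyclic remainder $T-S$ (which is what the cited kernelization does), and that is precisely the content your proposal defers. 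So the proposal identifies a plausible first reduction but does not contain the idea that makes the lemma true; as the lemma is anyway used here as a black box from \cite{Dom201076}, the honest options are either to cite it as the paper does or to actually carry out the deferred construction.
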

For any sequence $\sigma$, let $|\sigma|$ denote the length of $\sigma$. For each $i=1,2,\dots,|\sigma|$, let $V_i$ be the $i$-th element of $\sigma$. Let $T$ be an $n$-vertex acyclic bipartite tournament. The \emph{canonical sequence} for $T$ is the sequence $\sigma$ of vertex sets that can be obtained from $T$ in $O(n^2)$ time as follows: For each $i\geq 1$, let $V_i$ consist of the vertices without incoming edges in $T\setminus \bigcup_{j=1}^{i-1}V_j$.
\begin{lemma}\label{sequence}\cite{Hsiao11}
Let $T$ be an $n$-node acyclic bipartite tournament. Let $\sigma$ be the canonical sequence for $T$. The following statements hold. (i) $V_1,V_2,\dots, V_{|\sigma|}$ form a partition of $V(T)$. (ii) For each directed edge $(u,v)$ of $T$, the vertex set $V_i$ containing $u$ precedes the vertex set $V_j$ containing $v$ in the sequence (i.e. $i<j$). (iii) $A=\bigcup_{i\equiv 1 \mod 2}V_i$ and $B=\bigcup_{i\equiv 0 \mod 2}V_i$ are the partite sets of $T$.
\end{lemma}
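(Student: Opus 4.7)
The plan is to prove the three parts in the order (i), (ii), (iii), since later parts rely on earlier ones. The central observation I would use throughout is that an induced subgraph of an acyclic bipartite tournament is again an acyclic bipartite tournament, and any non-empty acyclic digraph has at least one source (a vertex with no incoming edge). So the construction can always take a non-empty step as long as vertices remain.

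For (i), I would argue that the procedure is well-defined and terminates: at each stage the subgraph $T \setminus \bigcup_{j<i} V_j$ is acyclic, so if it is non-empty it contains a source, making $V_i$ non-empty. Since each $V_i$ is non-empty whenever something remains, the process strips off all vertices in finitely many steps, and by construction a vertex lands in $V_i$ for the first $i$ at which it becomes a source, giving a partition.

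For (ii), I would take an arbitrary arc $(u,v)$ with $u \in V_i$, $v \in V_j$ and show $j > i$. The key point is that at every stage $\ell \le i$, the vertex $u$ is still present in the remaining graph, so $v$ still has the incoming arc $(u,v)$ and therefore cannot be chosen as a source at step $\ell$. Hence $v$ is only removed at some later step, i.e.\ $j > i$.

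For (iii), I would argue by induction that each $V_i$ lies entirely in one part of the bipartition, with odd indices landing in one side (say $A$) and even indices in the other. The base case uses bipartiteness directly: if both some $a \in A$ and some $b \in B$ belonged to $V_1$, then the unique arc between $a$ and $b$ would give one of them an incoming edge, contradiction. So WLOG $V_1 \subseteq A$. For the inductive step, assume $V_1,\dots,V_{i-1}$ alternate between $A$ and $B$ starting with $A$. Every vertex in the ``same side'' as $V_{i-1}$ that is still present must have an incoming arc from a vertex on the opposite side that is also still present (otherwise it would have been selected earlier, by (ii) and the induction hypothesis), and so it cannot be a source at step $i$; combined with the base-case argument applied to the residual bipartite tournament, this forces $V_i$ to sit entirely on the side opposite to $V_{i-1}$. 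The main delicate point, and the only place where I expect to slow down, is handling the boundary case in which one side is exhausted before the other: I would check that in that case the remaining vertices on the non-empty side have no incoming arcs in the residual graph and are all scooped into a single $V_\ell$ of the correct parity, so the alternation pattern and the identification of $A$ and $B$ with the odd and even unions is preserved.
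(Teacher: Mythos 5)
Your argument is correct, but note that the paper does not prove this lemma at all: it is imported by citation from Hsiao (ISAAC 2011), so there is no in-paper proof to compare against. Your three steps are the natural elementary argument and they all go through. For (i) and (ii) the reasoning is exactly right: induced subgraphs of an acyclic digraph are acyclic and hence have a source, and a vertex $v$ with an in-neighbour $u\in V_i$ cannot be a source at any step $\ell\leq i$ because $u$ is still present there. For (iii), the one place you flag as delicate --- the boundary case where one partite set is exhausted first --- in fact resolves itself by the same witness you already use in the inductive step: if $V_{i-1}\subseteq A$ and a vertex $v\in A$ is still present at step $i$, then $v$ was not a source at step $i-1$, so it has an in-neighbour $w\in B$ present at step $i-1$; since $V_{i-1}\subseteq A$, this $w$ is not removed at step $i-1$ and is still present at step $i$. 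This simultaneously shows that $v$ is not a source at step $i$ (so $V_i\subseteq B$) \emph{and} that $B$ cannot be exhausted while same-side vertices remain, so the ``scooping up the leftover side'' scenario can only occur with the correct parity and no separate case analysis is needed. The only cosmetic point is that statement (iii) should be read as naming the partite sets: after the WLOG choice $V_1\subseteq A$, the odd-indexed union is $A$ and the even-indexed union is $B$.
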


\begin{definition}[$t$-wise independent]
	A family $H_{n,t,q}$ of functions from $[n]$ to $[q]$ is called a $t$-wise independent sample space if, for every $t$ positions $1<i_1<i_2<\cdots<i_t\leq n$, and every tuple $\alpha\in [q]^t$, we have
	$Pr(f(i_1),f(i_2),\dots,f(i_t))=\alpha=q^{-t}$
	where the function $f\in H_{n,t,q}$ is chosen uniformly at random.
\end{definition}
\begin{theorem}\cite{AlonBI86}\label{twise}
	There exists a $t$-wise independent sample space $H_{n,t,q}$ of size $O(n^t)$ and it can be constructed efficiently in time linear in the output size.
\end{theorem}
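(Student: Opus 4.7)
The plan is to use the classical polynomial construction over a finite field. First I would pick a prime power $p$ with $\max(n,q)\le p = O(\max(n,q))$ and $q\mid p$; when $q$ is itself a prime power this is easy (take $p = q^{\lceil \log_q n\rceil}$), and the general case is reduced to this by a small modification (e.g.\ pass to the smallest prime-power multiple of $q$ that exceeds $n$, which by Bertrand's postulate is still $O(\max(n,q))$). I would then identify $[n]$ with $n$ distinct elements $x_1,\ldots,x_n\in\mathbb{F}_p$ and define the sample space by
\[
H_{n,t,q} \;=\; \bigl\{\, f_{\mathbf{a}} \;:\; \mathbf{a}\in \mathbb{F}_p^{t}\,\bigr\}, \qquad f_{\mathbf{a}}(i) \;=\; \Bigl(\sum_{j=0}^{t-1} a_j\, x_i^{\,j}\Bigr) \bmod q,
\]
where $\mathbf{a}=(a_0,\ldots,a_{t-1})$. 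The cardinality is $|H_{n,t,q}| = p^{t} = O(n^{t})$ under the natural assumption $q = O(n)$.

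Next I would verify the $t$-wise independence condition. Fix indices $i_1<\cdots<i_t$ and a target $\alpha\in[q]^t$. The key observation is that the evaluation map $\mathbf{a}\mapsto\bigl(g_{\mathbf{a}}(x_{i_1}),\ldots,g_{\mathbf{a}}(x_{i_t})\bigr)$, where $g_{\mathbf{a}}(x)=\sum_{j=0}^{t-1} a_j x^j$, is a linear bijection $\mathbb{F}_p^{t}\to\mathbb{F}_p^{t}$ because its matrix is the Vandermonde matrix at the distinct nodes $x_{i_1},\ldots,x_{i_t}$ and hence invertible over $\mathbb{F}_p$. Thus the evaluation vector is uniform on $\mathbb{F}_p^{t}$ when $\mathbf{a}$ is uniform, and since $q\mid p$ the coordinatewise reduction mod $q$ pushes this forward to the uniform distribution on $[q]^{t}$: exactly $(p/q)^{t}$ of the $p^{t}$ coefficient tuples realise any prescribed $\alpha$, giving probability $q^{-t}$ as required.

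Finally, for the efficiency claim I would enumerate the $p^{t}$ coefficient tuples and, for each, evaluate $g_{\mathbf{a}}$ at all $n$ points via Horner's rule in $O(tn)$ time; the total running time is $O(tnp^{t})$, which up to constants equals the size of the full $n\times p^{t}$ output table. The hard part will be the bookkeeping around the case when $q$ is not a prime power: one has to choose $p$ so that $q\mid p$ while keeping $p=O(\max(n,q))$, and show that the uniform distribution is preserved by the mod-$q$ map. This is a routine but slightly tedious case analysis and is the only non-conceptual step in the argument; the Vandermonde-plus-uniform-reduction core of the proof is completely standard.
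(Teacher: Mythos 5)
The paper offers no proof of Theorem~\ref{twise}: it is imported wholesale from Alon, Babai and Itai, so there is no internal argument to compare yours against. Judged on its own, your core construction is the standard and correct one --- degree-$(t-1)$ polynomials over a field $\mathbb{F}_p$ with $p\ge n$, the Vandermonde argument showing that the evaluation map at any $t$ distinct nodes is a linear bijection of $\mathbb{F}_p^{t}$, hence a uniform evaluation vector, followed by an exactly $(p/q)$-to-one coordinatewise reduction to $[q]$ --- and the Horner's-rule efficiency analysis is fine.

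The gap sits exactly where you flagged it, but it is not routine bookkeeping. The two requirements ``$p$ is a prime power'' and ``$q\mid p$'' are jointly satisfiable only when $q$ is itself a power of the same prime, because the divisors of $r^{m}$ are precisely the powers of $r$. Your fallback for general $q$ --- ``the smallest prime-power multiple of $q$ that exceeds $n$'' --- is therefore not a prime power when $q$ is not, so you are working in $\mathbb{Z}/p\mathbb{Z}$ for composite $p$, which is not a field; the Vandermonde matrix need no longer be invertible and the uniformity argument collapses. (Also, when $p=r^{m}$ with $m>1$ the expression ``$\bmod\ q$'' applied to field elements requires fixing a bijection $\mathbb{F}_p\to\{0,\dots,p-1\}$ first; any bijection works, but it must be said.) Separately, even in the good case your choice $p=q^{\lceil\log_q n\rceil}$ can be as large as $qn$, so $|H_{n,t,q}|=p^{t}$ is $O((qn)^{t})$, not $O(n^{t})$; for $q$ a prime close to $n$ the space has size $\Theta(n^{2t})$. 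What your argument actually proves is the theorem for $q$ a power of a \emph{fixed} prime (e.g.\ a power of two, where $p\le\max(q,2n)$ and $p^{t}=O(n^{t})$ does hold). That restricted version suffices for this paper --- in Lemma~\ref{undeletable1} one may round $q=\log^{2}k$ up to the next power of two without affecting any estimate --- but to prove the theorem as stated for arbitrary $q$ you would need either to restrict the statement or to pass to the standard workaround (an almost $t$-wise independent space with per-coordinate bias $O(q/p)$), neither of which is the ``small modification'' your write-up promises.
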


%%%%%%%%%%%%%%%%%%%%%%%%%%%%%%%%%%%%%%%%%%%%%%%%%%%%%%%%
%%%%%%%%%%%%%%%%%%%%%%%%%%%%%%%%%%%%%%%%%%%%%%%%%%%%%%%%
%%%%%%%%%%%%%%%%%%%%%%%%%%%%%%%%%%%%%%%%%%%%%%%%%%%%%%%%
\section{$M$-Sequence}\label{M}

First we extend the notion of the canonical sequence to general bipartite tournaments relative to a set $M$ of vertices.
\begin{definition}[$M$-equivalent]
	Given a directed graph $T$ and a subset $M\subseteq V(T)$, two vertices $u,v\in V(T)$ are said to be $M$-equivalent if $N^+(u)\cap M=N^+(v)\cap M$ and $N^-(u)\cap M=N^-(v)\cap M$.
\end{definition} 
\begin{definition}[$(M,X)$-equivalent]
	Let $T$ be a bipartite tournament and a subset $M\subseteq V(T)$ such that $T[M]$ is acyclic. Let $(X_1,X_2,\dots)$ be the canonical sequence of $T[M]$. For any set $X_i$ in the canonical sequence of $T[M]$ and any vertex $v\in V(T)$, $v$ is called $(M,X_i)$-equivalent if $v$ is $M$-equivalent to a vertex in $X_i$. 	
	\end{definition}
\begin{definition}[$(M,X)$-conflicting]
	Let $T$ be a bipartite tournament and a subset $M\subseteq V(T)$ such that $T[M]$ is acyclic. Let $(X_1,X_2,\dots)$ be the canonical sequence of $T[M]$. For any set $X_i$ in $(X_1,X_2,\dots)$ and for any vertex $v\in V(T)$, $v$ is called $(M,X_i)$-conflicting if
	\begin{itemize}
 		\item  $N^+(v)\cap X_i\neq \emptyset$ and $N^-(v)\cap X_i\neq \emptyset$,
  		\item for every $j<i$, $N^+(v)\cap X_j=\emptyset$ and for every $j>i$, $N^-(v)\cap X_j=\emptyset$. 
 	\end{itemize}
\end{definition}
 \begin{definition}[$M$-consistent]
 	Let $T$ be a directed graph and $M\subseteq V(T)$. $T$ is called $M$-consistent if for every vertex $v\in V(T)$ $T[M\cup \{v\}]$ is acyclic. 
 \end{definition}
 As a direct consequence of the above definitions, we have the following lemma.
 \begin{lemma}\label{conflicting}
 	Let $T$ be an $M$-consistent bipartite tournament for some subset $M\subseteq V(T)$. Let $(X_1,X_2,\dots,X_i,X_{i+1},\dots)$ be the canonical sequence of $T[M]$. Let $v\in V(T)$ be a $(M,X_i)$-conflicting vertex. Then, the canonical sequence of $T[M\cup \{v\}]$ is $(X_1,X_2,\dots,X_i',\{v\},X_i'',X_{i+1},\dots)$ where $X_i'\cup X_i''=X_i$ such that $X_i',X_i''\neq \emptyset$. 
 \end{lemma}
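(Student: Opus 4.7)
My plan is to compute the canonical sequence of $T[M\cup\{v\}]$ layer by layer and verify it matches the claimed sequence, relying on two structural observations. First, since $N^+(v)\cap X_i\neq\emptyset$ and each layer of the canonical sequence lies in a single part of the bipartition (Lemma~\ref{sequence}(iii)), $v$ and $X_i$ must lie in opposite parts; hence every vertex of $X_i$ is adjacent to $v$, so the sets $X_i' := X_i\cap N^-(v)$ and $X_i'' := X_i\cap N^+(v)$ partition $X_i$, and both are nonempty by the $(M,X_i)$-conflicting hypothesis. Second, in the canonical sequence of any acyclic bipartite tournament, every arc between consecutive layers $X_j$ and $X_{j+1}$ must be oriented from $X_j$ to $X_{j+1}$: otherwise some $w\in X_j$ would have an in-neighbor in $T[M]\setminus \bigcup_{l<j}X_l$, contradicting $w\in X_j$. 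Combining these two facts yields that every vertex of $X_{i+1}$ has every vertex of $X_i''$ as an in-neighbor.

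Before running the peeling I also want to record, using the conflicting conditions, that $N^-(v)\cap M \subseteq \bigcup_{j<i}X_j\cup X_i'$ and $N^+(v)\cap M \subseteq X_i''\cup \bigcup_{j>i}X_j$: for $j<i$ the condition $N^+(v)\cap X_j=\emptyset$ forces any adjacency between $v$ and $X_j$ to orient into $v$, and $v$ is non-adjacent to same-part layers, with the symmetric statement for $j>i$. The peeling argument then goes as follows. Through the first $i-1$ peels, $v$ contributes no new in-arcs to any $X_j$ with $j<i$, so $X_1,\dots,X_{i-1}$ are peeled in the same order as in $T[M]$, while $v$ is never peeled because $X_i'\subseteq N^-(v)$ is still present. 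At peel $i$ the remaining graph is $X_i\cup\{v\}\cup \bigcup_{j>i}X_j$: vertices of $X_i'$ have lost their last in-neighbors (note $v\not\to X_i'$), while $X_i''$ still receives $v$, $v$ still receives $X_i'$, and each $X_{i+1}$-vertex still receives all of $X_i$; so exactly $X_i'$ is peeled.

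At peel $i+1$, $v$ has lost its last in-neighbor, $X_i''$ still receives $v$, and each $X_{i+1}$-vertex still has in-neighbors inside the nonempty $X_i''$ by the structural observation; hence $\{v\}$ is peeled alone. At peel $i+2$ the source-free set is exactly $X_i''$; removing it leaves $T[\bigcup_{j>i}X_j]$, whose canonical sequence is $X_{i+1}, X_{i+2},\dots$ by construction. The delicate point will be the $(i+1)$-st peel, where I must rule out that any $X_{i+1}$-vertex is peeled simultaneously with $v$; this is precisely where both the nonemptiness of $X_i''$ (supplied by the $(M,X_i)$-conflicting hypothesis) and the complete $X_i\to X_{i+1}$ orientation are essential.
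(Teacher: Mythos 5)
Your argument is correct and complete. The paper offers no proof of this lemma at all (it is asserted as ``a direct consequence of the above definitions''), so your layer-by-layer peeling is precisely the verification the authors leave implicit; the two structural facts you isolate --- that $X_i$ and $v$ lie in opposite partite sets so $X_i'=X_i\cap N^-(v)$ and $X_i''=X_i\cap N^+(v)$ partition $X_i$, and that every vertex of $X_{j+1}$ has all of $X_j$ as in-neighbours --- are exactly what is needed, and your handling of the $(i+1)$-st peel correctly identifies where the nonemptiness of $X_i''$ is used.
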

 \begin{definition}[$M$-universal]
 	Let $T$ be a bipartite tournament and a subset $M\subseteq V(T)$ such that $T[M]$ is acyclic. Let $(X_1,X_2,\dots)$ be the canonical sequence of $T[M]$. A vertex $v\in V(T)$ is called $M$-universal if  the following holds:
%%% SHORT
(i) $v$ is not $(M,X_i)$-equivalent for any $X_i$, (ii) $T[M\cup \{v\}]$ is acyclic. (iii) There exists a topological sort of $T[M\cup \{v\}]$ such that $v$ is either the first vertex (called $M^-$-universal) or is the last vertex (called $M^+$-universal) in the ordering. 	
 \end{definition}
\begin{lemma}\label{Mconsistent}
 	Let $T$ be an $M$-consistent bipartite tournament and let $(X_1,X_2,\dots)$ be the canonical sequence of $T[M]$. Then, for every vertex $v\in V(T)$, there exists a unique index $i$ such that $v$ satisfies exactly one of the following properties:
%% SHORT
(i) $v$ is $(M,X_i)$-equivalent, (ii) $v$ is $(M,X_i)$-conflicting, (iii) $v$ is $M$-universal. 	
  \end{lemma}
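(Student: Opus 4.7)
The plan is to classify $v$ by comparing its $M$-neighborhood with the canonical sequence $(X_1, X_2, \dots)$ of $T[M]$. The main engine is a structural observation about acyclic bipartite tournaments: for any two levels $X_p, X_q$ with $p < q$ on opposite partite sides, every arc between them is directed from $X_p$ to $X_q$; indeed a reverse arc $w \to u$ with $w \in X_q, u \in X_p$ would give $u$ an in-neighbor in $T[M] \setminus \bigcup_{j<p} X_j$, contradicting $u \in X_p$. As a direct consequence, every $u \in X_k$ has the same $M$-neighborhoods, namely $N^-(u) \cap M = \bigcup_{j<k,\, j \not\equiv k \pmod{2}} X_j$ and $N^+(u) \cap M = \bigcup_{j>k,\, j \not\equiv k \pmod{2}} X_j$.

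Assume WLOG $v \in A$, so the $M$-neighbors of $v$ lie in $\bigcup_{j \text{ even}} X_j$. For each non-empty even $X_j$ I would label it $\mathrm{In}$, $\mathrm{Out}$, or $\mathrm{Split}$ according to whether $X_j \subseteq N^-(v)$, $X_j \subseteq N^+(v)$, or $X_j$ contains vertices of both kinds. Combining $M$-consistency with the structural observation, the labels (in increasing $j$) must follow the pattern $\mathrm{In}^*\,(\mathrm{Split}\mid\varepsilon)\,\mathrm{Out}^*$: otherwise, an $\mathrm{Out}$ vertex $u$ at some $X_j$ together with a later $\mathrm{In}$ vertex $w$ at some $X_k$ (with $j < k$) would give a cycle $v \to u \to x_{j+1} \to \dots \to x_{k-1} \to w \to v$ obtained by picking any witness at each non-empty intermediate level, contradicting $M$-consistency.

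The three cases of the lemma correspond directly to which block of this pattern is active. If some level $X_i$ is $\mathrm{Split}$, the conditions defining $(M, X_i)$-conflicting hold verbatim, and $i$ is unique because at most one level can be $\mathrm{Split}$; $v$ cannot be $(M, X_k)$-equivalent, since the structural observation forces any candidate $u \in X_k$ to have $N^-(v) \cap X_i$ equal to all of $X_i$ or to $\emptyset$, never a proper subset, and $v$ cannot be $M$-universal as it has both in- and out-neighbors in $M$. If there is no $\mathrm{Split}$ but both $\mathrm{In}$ and $\mathrm{Out}$ appear, setting $i^- = \max\{j : \mathrm{In}\}$ and $i^+ = \min\{j : \mathrm{Out}\}$, the same forbidden-pattern argument forces $i^+ = i^- + 2$ (else the intermediate non-empty even level would carry no valid label), and then the structural observation shows $v$ has the same $M$-neighborhoods as every vertex in the odd level $X_{i^-+1}$, giving unique $(M, X_{i^-+1})$-equivalence. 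The one-sided and empty configurations are handled in the same spirit: $v$'s profile either still matches an extremal odd level (giving equivalence) or ``falls off'' the end of the sequence, in which case $v$ becomes $M$-universal.

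The main obstacle I anticipate is the endpoint bookkeeping in this last case -- deciding when a one-sided $v$ is $(M, X_k)$-equivalent to the first or last odd level of the sequence and when it instead becomes $M^-$- or $M^+$-universal. The structural fact that all vertices of a single level $X_k$ share one common $M$-neighborhood is the engine that both performs this distinction and delivers uniqueness of $i$ throughout.
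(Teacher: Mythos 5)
Your plan is correct and rests on the same core argument as the paper's proof: the forward-arc structure of the canonical sequence and the four-cycle $v\,u_i\,u_l\,u_j\,v$ obtained from an out-neighbor at an earlier level and an in-neighbor at a later one, which is exactly how the paper rules out the forbidden pattern. You merely reorganize it as an exhaustive $\mathrm{In}/\mathrm{Out}/\mathrm{Split}$ classification (and are, if anything, more careful about the endpoint/universal cases than the paper's terse treatment), so this is essentially the same proof.
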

 \begin{proof} 
 Since $T$ is $M$-consistent, $T[M\cup \{v\}]$ is acyclic.
 By definition, $v$ can not satisfy more than one property. 
 If $v$ is $M$-universal, then, $v$ is neither $(M,X_i)$-equivalent nor $(M,X_i)$-conflicting for any set $X_i$.
 
 If $v$ is $(M,X_i)$-equivalent to some set $X_i$, then by definition, $v$ is not $M$-universal. In addition, for any set $X_j$, $v$ is not $(M,X_j)$-conflicting as no vertex in $X_i$ is $(M,X_j)$-conflicting.

  Suppose that $v$ is neither $(M,X_i)$-equivalent for any $X_i$ nor $M$-universal. We show that $v$ is $(M,X_i)$-conflicting the first set $X_i$ that contains an out-neighbor $u_i$ of $v$. Suppose that there is an index $j>i$ such that $X_j$ contains an in-neighbor $u_j$ of $v$. Since $j-i\geq 2$, there is an index $i<l<j$ such that $X_l$ lies in the partite set of $T$ different from $X_i\cup X_j$. This gives us a cycle $vu_iu_lu_jv$ where $u_l\in X_l$ contradicting that $T[M\cup \{v\}]$ is acyclic. If every vertex in $X_i$ is an out-neighbor of $v$, then by definition of the canonical sequence, $v$ is $(M,X_{i-1})$-equivalent contradicting the above assumption. Hence, $X_i$ contains an in-neighbor of $v$, thereby proving that $v$ is $(M,X_i)$-conflicting.
 	 \end{proof}
 \begin{definition}[$M$-sequence]
 	Let $T$ be an $M$-consistent bipartite tournament and $(X_1',X_2',\dots)$ be the canonical sequence of $T[M]$. An $M$-sequence $(X_1,Y_1,X_2,Y_2,\dots,X_l,Y_l)$ of $T$ is a sequence of subsets $V(T)$ such that for every index $i$, $X_i$ is the set of all vertices in $V(T)$ that are $(M,X_i')$-equivalent and $Y_i$ is the set of vertices that are $(M,X_i')$-conflicting. In addition, $Y_1$ contains every $M^-$-universal vertex and $Y_l$ contains every $M^+$-universal vertex. For every $i$, the set $X_i\cup Y_i$ is called a \emph{block}, $X_i$ is called the $M$-sub-block and $Y_i$ is called the $\bar{M}$-sub-block.     
 \end{definition}
 \begin{lemma}
 	If $T$ is an $M$-consistent bipartite tournament, then $T$ has a unique $M$-sequence.
 \end{lemma}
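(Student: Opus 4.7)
The plan is to show uniqueness by decomposing what must be specified in an $M$-sequence into three pieces and pinning each down in turn. The first piece is the block indexing: the sets $X_1', \ldots, X_l'$ that index the $M$-sequence form the canonical sequence of $T[M]$, which is built by deterministically peeling off in-degree-zero vertices, so $l$ and the $X_i'$ depend only on $T$ and $M$.

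The second piece is the placement of each vertex. I would apply Lemma~\ref{Mconsistent} directly: every $v \in V(T)$ falls into exactly one of the three categories (equivalent, conflicting, universal) for a unique index $i$. Hence the membership of $v$ in $X_i$, $Y_i$, or one of the two universal slots is forced up to the last remaining ambiguity, namely whether a universal vertex is $M^-$- or $M^+$-universal.

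The main obstacle will be the third piece: ruling out that some $v$ is simultaneously $M^-$- and $M^+$-universal, which would leave an ambiguity between $Y_1$ and $Y_l$ when $l \geq 2$. I would argue this by contradiction. If both hold for the same $v$, then some topological sort of $T[M \cup \{v\}]$ places $v$ first and another places $v$ last, which forces $N^-(v) \cap M = N^+(v) \cap M = \emptyset$, so $v$ has no neighbors in $M$. Since $T$ is a bipartite tournament with every cross-partite pair adjacent, this requires $M$ to lie entirely in the partite set containing $v$. But then $T[M]$ has no edges, the canonical sequence collapses to $X_1' = M$, and every $u \in M$ also satisfies $N^+(u) \cap M = N^-(u) \cap M = \emptyset$; so $v$ is $(M, X_1')$-equivalent, contradicting that $v$ is $M$-universal. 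The first two pieces are bookkeeping on top of the deterministic canonical-sequence construction and Lemma~\ref{Mconsistent}; only the third step actually uses the bipartite-tournament structure, and that is where the real work sits.
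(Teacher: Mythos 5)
Your proposal is correct and follows the same skeleton as the paper's proof, which simply cites Lemma~\ref{Mconsistent} together with the uniqueness of the canonical sequence of $T[M]$; your first two ``pieces'' are exactly that citation spelled out. The one place where you go beyond the paper is the third piece: Lemma~\ref{Mconsistent} only says each vertex is $M$-universal or not, and the definition of the $M$-sequence separately routes $M^-$-universal vertices to $Y_1$ and $M^+$-universal vertices to $Y_l$, so one does need to rule out a vertex being both (otherwise the ``sequence'' would not even be a partition for $l\geq 2$). Your argument for this is sound: being first in some topological sort of $T[M\cup\{v\}]$ forces $N^-(v)\cap M=\emptyset$ and being last in some (possibly different) topological sort forces $N^+(v)\cap M=\emptyset$, which in a bipartite tournament pushes all of $M$ into $v$'s partite set and makes $v$ $(M,X_1')$-equivalent, contradicting universality. (The only unhandled corner is the degenerate case $M=\emptyset$, which the paper does not address either.) So the approaches coincide, but your version closes a small gap the paper leaves implicit, and that extra step is the only genuine mathematical content in the proof.
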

 \begin{proof}
 	The existence and the uniqueness of $M$-sequence follows from Lemma \ref{Mconsistent} and the uniqueness of the canonical sequence of $T[M]$. 
 \end{proof}
 As a consequence of Lemma \ref{sequence} and Lemma \ref{Mconsistent}, we get the following lemma.
 \begin{lemma}\label{Msequence}
 	Let $T:=(A,B,E)$ be an $M$-consistent bipartite tournament and $(X_1',X_2',\dots)$ be the canonical sequence of $T[M]$. Let $(X_1,Y_1,X_2,Y_2,\dots)$ be the $M$-sequence of $T$. The following statements hold:
(i) $X_1,Y_1,X_2,Y_2,\dots$ form a partition of $V(T)$, (ii) for each $i$, $X_i'\subseteq X_i$, (iii) for each $i$, $Y_i\cap M=\emptyset$, (iv) for every odd $i$, $X_i\subseteq A, Y_i\subseteq B$ and for every even $i$, $X_i\subseteq B, Y_i\subseteq A$.
 \end{lemma}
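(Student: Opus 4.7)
My plan is to read off all four statements from Lemma \ref{Mconsistent} together with the bipartite structure of the canonical sequence supplied by Lemma \ref{sequence}; parts (i)--(iii) are essentially bookkeeping, while (iv) requires tracking which partite set each vertex ends up in.

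For (i), Lemma \ref{Mconsistent} tells me that every $v \in V(T)$ satisfies exactly one of the three properties, and in the first two cases the index is unique. Combined with the placement rules in the definition of $M$-sequence, this gives a disjoint cover of $V(T)$. For (ii), each $u \in X_i'$ is trivially $M$-equivalent to itself and so $(M,X_i')$-equivalent, placing it in $X_i$. For (iii), any $v \in M$ lies in a unique $X_j'$ by Lemma \ref{sequence}(i), hence is $(M,X_j')$-equivalent to itself; Lemma \ref{Mconsistent} then rules out $v$ being $(M,X_i')$-conflicting or $M$-universal, so $v$ belongs to no $Y_i$.

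For (iv), I would start by invoking Lemma \ref{sequence}(iii) on $T[M]$: with the convention $X_1' \subseteq A$, one gets $X_i' \subseteq A$ for odd $i$ and $X_i' \subseteq B$ for even $i$. For $v \in X_i$ that is $M$-equivalent to some $u \in X_i'$, the shared neighborhood $N(v) \cap M = N(u) \cap M$ lies entirely on the side opposite to $u$; since $T$ is a bipartite tournament and $v$ is adjacent to every vertex on the opposite side of its own part, $v$ must lie on the same side as $u$, lest $v$ be forced to be adjacent to all of $M$ on $u$'s side in contradiction with the matched neighborhoods. For $(M,X_i')$-conflicting $v \in Y_i$, the existence of both an in- and an out-neighbor of $v$ inside $X_i'$ forces $v$ onto the opposite side of $X_i'$, as required.

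The main obstacle is showing that the $M$-universal vertices placed in $Y_1$ or $Y_l$ also land on the correct side. For an $M^-$-universal $v \in Y_1$, which by definition has no in-neighbor in $M$, the key observation is that any vertex on the same side as $X_1'$ with no in-neighbor in $M$ must dominate every vertex of $M$ on the opposite side; but this is precisely the property characterizing membership in $X_1'$ (every $u \in X_1'$ has $N^-(u)\cap M=\emptyset$ and, being on the $X_1'$-side, dominates $M$ on the opposite side). Consequently such a $v$ would be $M$-equivalent to every $u \in X_1'$, contradicting condition (i) of $M$-universal. Hence $v$ lies on the side opposite $X_1'$, which is $B$ when $1$ is odd, as required. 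A symmetric argument using out-neighbors handles $M^+$-universal vertices in $Y_l$ and matches both parities of $l$.
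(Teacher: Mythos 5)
Your proof is correct and follows exactly the route the paper intends: the paper gives no explicit proof, stating only that the lemma is ``a consequence of Lemma~\ref{sequence} and Lemma~\ref{Mconsistent},'' and your argument is a faithful fleshing-out of that derivation, including the only genuinely non-trivial point (placing the $M$-universal vertices on the correct partite side by showing a same-side universal vertex would be $M$-equivalent to $X_1'$ or $X_l'$). The one detail you gloss over in (i) --- that no vertex can be simultaneously $M^-$- and $M^+$-universal --- is harmless, since such a vertex would have no neighbours in $M$ and would then be $M$-equivalent to the vertices of $X_1'$, hence not universal at all.
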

\begin{definition}[Refinement]
	A partition $(V_1,V_2,\dots)$ of $U$ is said to be a refinement of another partition $(V_1',V_2',\dots)$ if for every set $V_i$ and $V_j'$, either $V_i\subseteq V_j'$ or $V_i\cap V_j'=\emptyset$.
\end{definition} 
 
 \begin{lemma}\label{refinement}
 	Let $T$ be an acyclic bipartite tournament. Then, for any subset $M\subseteq V(T)$, the canonical sequence of $T$ is a refinement of the $M$-sequence of $T$.
 \end{lemma}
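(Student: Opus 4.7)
The plan is to establish the stronger statement that all vertices lying in the same set $V_i$ of the canonical sequence of $T$ have identical in- and out-neighborhoods in the whole of $T$. Once this is shown, their $M$-neighborhoods coincide, and by Lemma~\ref{Mconsistent} they must be placed in the same block of the $M$-sequence, which is exactly the refinement property we want.

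For the main step, fix $V_i$ in the canonical sequence of $T$ and pick any $u,v \in V_i$. By Lemma~\ref{sequence}(iii), $u$ and $v$ lie in the same partite set; in particular they are non-adjacent in $T$, and every vertex adjacent to either of them sits in the opposite partite set. Now for any $w$ in the opposite partite set, $w$ is in some $V_j$ with $j \neq i$. If $j > i$, then Lemma~\ref{sequence}(ii) forces both $(u,w)$ and $(v,w)$ to be arcs of $T$; if $j < i$, it forces both $(w,u)$ and $(w,v)$ to be arcs. Hence $N^+(u) = N^+(v)$ and $N^-(u) = N^-(v)$, so $u$ and $v$ are $M$-equivalent for every $M \subseteq V(T)$.

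Since $T$ is acyclic, $T[M\cup\{v\}]$ is acyclic for every $v\in V(T)$, so $T$ is $M$-consistent and its $M$-sequence is well defined. By Lemma~\ref{Mconsistent}, each vertex $v$ falls into exactly one of the three cases (i) $(M,X_j')$-equivalent, (ii) $(M,X_j')$-conflicting, or (iii) $M$-universal, and each of these conditions is formulated purely in terms of $N^+(v)\cap M$ and $N^-(v)\cap M$. The only slightly delicate point is case (iii): the sub-classification into $M^-$- versus $M^+$-universal must also depend only on the $M$-neighborhoods, but this is immediate since $v$ can be placed first (resp.\ last) in a topological sort of $T[M\cup\{v\}]$ exactly when $N^-(v)\cap M=\emptyset$ (resp.\ $N^+(v)\cap M=\emptyset$). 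Therefore $u$ and $v$ receive the same classification and the same index $j$, and hence land in the same $M$-sub-block $X_j$ or $\bar M$-sub-block $Y_j$.

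I expect no real obstacle beyond keeping track of the $M^-$/$M^+$-universal bookkeeping; the rest is direct from the definitions and Lemma~\ref{sequence}. The argument shows that each $V_i$ is contained in a single block of the $M$-sequence, which combined with Lemma~\ref{sequence}(i) (canonical blocks partition $V(T)$) and Lemma~\ref{Msequence}(i) ($M$-sequence blocks partition $V(T)$) gives the required refinement.
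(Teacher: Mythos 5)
Your proof is correct and follows essentially the same route as the paper's: both rest on the observation that each set $V_i$ of the canonical sequence consists of false twins in $T$, so that the $M$-sequence classification (equivalent / conflicting / universal), which depends only on $N^+(\cdot)\cap M$ and $N^-(\cdot)\cap M$, is constant on $V_i$. You merely spell out two details the paper leaves implicit, namely the proof of the twin property via Lemma~\ref{sequence} and the check that $M^-$- versus $M^+$-universality is also determined by the $M$-neighborhoods.
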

 \begin{proof}
 	Let $(X_1',X_2',\dots)$ be the canonical sequence of $T[M]$ and let $(X_1,Y_1,\dots,X_l,Y_l)$ be the $M$-sequence of $T$. Let $(V_1,V_2\dots)$ be the canonical sequence of $T$. Since each set $V_i$ are twins in $T$, if any vertex in $V_i$ belongs to $X_j$, then every vertex in $V_i$ belongs to $X_j$. If any vertex in $V_i$ is $(M,X_j')$-conflicting, then every vertex in $V_i$ is $(M,X_j')$-conflicting. Hence, $V_i\subseteq Y_j$. If any vertex in $V_i$ is $M$-universal, then every vertex in $V_i$ is $M$-universal. Hence, $V_i\subseteq Y_1$ or $V_i\subseteq Y_l$. The family of sets $V_i$ that contain an $M^-$-universal vertex lie in $Y_1$ and the family of sets $V_i$ that contain an $M^+$-universal vertex lie in $Y_l$.

 \end{proof}
\begin{lemma}\label{adjust}
	Let $T$ and $T\cup \{v\}$ be two $M$-consistent bipartite tournaments and let $(X_1,Y_1,\dots)$ be the $M$-sequence of $T$. Then, there exists an index $i$, such that the $M$-sequence of $T\cup \{v\}$, is either $(X_1,Y_1,\dots,X_i\cup \{v\},\dots)$ or $(X_1,Y_1,\dots,Y_i\cup \{v\},\dots)$.
\end{lemma}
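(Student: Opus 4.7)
The plan is to observe that adding $v$ to $T$ does not disturb the canonical sequence of $T[M]$ nor the sub-block assignment of any previously existing vertex, so the only change in the $M$-sequence is where $v$ itself lands, and Lemma~\ref{Mconsistent} forces $v$ into exactly one existing sub-block.

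First I would note that since $M \subseteq V(T)$, the vertex $v$ is not in $M$, hence $T[M] = (T\cup\{v\})[M]$. Consequently the canonical sequence $(X_1', X_2', \dots)$ of $T[M]$ is unchanged when we pass to $T \cup \{v\}$. Next, for every $u \in V(T)$, the properties of being $(M, X_i')$-equivalent, $(M, X_i')$-conflicting, or $M$-universal depend solely on $N^+(u) \cap M$, $N^-(u) \cap M$, and the acyclicity of $T[M \cup \{u\}]$; none of these quantities is altered by adding the new vertex $v \notin M$. Therefore every vertex of $V(T)$ retains its sub-block assignment in the $M$-sequence of $T \cup \{v\}$, so the sets $X_1, Y_1, X_2, Y_2, \dots$ remain intact.

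It only remains to place $v$. Since $T \cup \{v\}$ is $M$-consistent, Lemma~\ref{Mconsistent} applied to $v$ gives exactly one of three cases. If $v$ is $(M, X_i')$-equivalent, then by the definition of the $M$-sequence $v$ joins the $M$-sub-block $X_i$, yielding the sequence $(X_1, Y_1, \dots, X_i \cup \{v\}, Y_i, \dots)$. If $v$ is $(M, X_i')$-conflicting, then $v$ joins $Y_i$, yielding $(X_1, Y_1, \dots, X_i, Y_i \cup \{v\}, \dots)$. If $v$ is $M$-universal, the definition of the $M$-sequence places $v$ into $Y_1$ (if $M^-$-universal) or $Y_l$ (if $M^+$-universal); either falls under the second form with $i = 1$ or $i = l$.

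I do not foresee a genuine obstacle: the only subtle point is checking that an $(M, X_i')$-conflicting $v$ does not create a new block. This could be a worry in light of Lemma~\ref{conflicting}, which says that for the canonical sequence of $T[M \cup \{v\}]$ itself the set $X_i'$ splits around $\{v\}$. But the $M$-sequence is a coarser partition built from the \emph{unchanged} canonical sequence of $T[M]$, not of $T[M \cup \{v\}]$, so no splitting occurs at the level of the $M$-sequence. Once this distinction is made explicit, the three cases above complete the proof.
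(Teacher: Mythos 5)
Your proof is correct and follows essentially the same route as the paper, whose entire proof of this lemma is the single sentence that it follows from Lemma~\ref{Mconsistent}; your write-up simply makes explicit the two facts the paper leaves implicit (that $v\notin M$ keeps the canonical sequence of $T[M]$ and the classification of all old vertices unchanged, and that Lemma~\ref{Mconsistent} then places $v$ in exactly one sub-block). The remark distinguishing the $M$-sequence from the canonical sequence of $T[M\cup\{v\}]$ in Lemma~\ref{conflicting} is a worthwhile clarification, not a deviation.
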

\begin{proof}
	The proof follows from Lemma \ref{Mconsistent}.
\end{proof}
\begin{lemma}
Let $T$ be a bipartite tournament and $H$ be a feedback vertex set of $T$. Let $M\subseteq T-H$ and $P\subseteq H$. Let $(X_1,Y_2,\dots,X_l,Y_l)$ be the $M$-sequence of $T-H$ and $(X_1',Y_1',\dots,X_l',Y_l')$ be the $M$-sequence of $T-P$. Then, for each index $i$, $X_i\subseteq X_i'$ and $Y_i\subseteq Y_i'$. 
\end{lemma}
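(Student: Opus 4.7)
The plan is to observe that every ingredient going into the definition of the $M$-sequence of a vertex $v$ is intrinsic to the induced subgraph $T[M \cup \{v\}]$, and therefore does not change when we pass from $T - H$ to the strictly larger supergraph $T - P$ (which contains $T-H$ since $P \subseteq H$). First I would note that $M \subseteq T - H \subseteq T - P$, so $T[M]$ is the same induced subtournament in both cases; hence its canonical sequence $(X_1^M, X_2^M, \dots, X_l^M)$ is identical on both sides. In particular, the $M$-sequences of $T-H$ and $T-P$ have the same length $l$, namely the length of the canonical sequence of $T[M]$, so indexing their blocks by a common $i$ is meaningful.

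Next I would check that the three predicates used by Lemma~\ref{Mconsistent} to classify each vertex of an $M$-consistent tournament, namely being $(M, X_i^M)$-equivalent, being $(M, X_i^M)$-conflicting, and being $M$-universal (further split into $M^-$-universal and $M^+$-universal), depend only on the arcs between $v$ and $M$ together with the acyclicity of $T[M \cup \{v\}]$. In other words, the property and the index $i$ that Lemma~\ref{Mconsistent} attaches to $v$ are determined entirely by $T[M \cup \{v\}]$ and not by the rest of the host graph.

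Combining these two observations gives the lemma. For any $v \in X_i$ in the $M$-sequence of $T - H$, one has $v \in T - P$ and the induced graph $T[M \cup \{v\}]$ is the same on both sides, so Lemma~\ref{Mconsistent} assigns $v$ the same index $i$ and the same classification relative to $T - P$, placing $v$ into $X_i'$. The analogous argument places every $v \in Y_i$ into $Y_i'$. Hence $X_i \subseteq X_i'$ and $Y_i \subseteq Y_i'$ for each $i$.

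The main thing to be careful about is the handling of $M$-universal vertices, which are placed into $Y_1$ if $M^-$-universal and into $Y_l$ if $M^+$-universal. I would verify that these two sub-properties depend only on whether $v$ has, respectively, no in-neighbour or no out-neighbour in $M$, so they transfer verbatim from $T-H$ to $T-P$, and that the boundary indices $1$ and $l$ agree between the two $M$-sequences, which they do because $l$ is fixed by the canonical sequence of $T[M]$. Beyond this bookkeeping I do not anticipate any substantive obstacle.
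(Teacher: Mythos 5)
Your proof is correct and follows essentially the only natural route: the paper states this lemma without any proof, evidently regarding it as immediate from Lemma~\ref{Mconsistent} in the same way as the preceding Lemma~\ref{adjust}, and your argument---that the classification of a vertex $v$ as equivalent, conflicting, or universal, together with its index, is intrinsic to $T[M\cup\{v\}]$ and hence unchanged when passing from $T-H$ to the supergraph $T-P$---is precisely the justification being left implicit. The only point worth flagging is that the well-definedness of the $M$-sequence of $T-P$ (i.e.\ that $T-P$ is $M$-consistent) is an implicit hypothesis of the statement rather than something you need to derive.
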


%%%%%%%%%%%%%%%%%%%%%%%%%%%%%%%%%%%%%%%%%%%%%%%%%%%%%%%%
%%%%%%%%%%%%%%%%%%%%%%%%%%%%%%%%%%%%%%%%%%%%%%%%%%%%%%%%
%%%%%%%%%%%%%%%%%%%%%%%%%%%%%%%%%%%%%%%%%%%%%%%%%%%%%%%%
\section{Constrained Feedback Vertex Set in Bipartite Tournaments}\label{family}

Given a tournament $T$ and an integer $k$, in the first phase of the algorithm for feedback vertex set in tournaments of Kumar and Lokshtanov in \cite{KumarL16}, a family of sub-exponential size of vertex set pairs $(M,P)$ was obtained such that the sought solution $H$ is disjoint from $M$ and contains $P$. The uniqueness of the topological sort of an acyclic tournament implied that any edge going from right to left (referred as \emph{back edge}) \emph{over} an $M$-vertex becomes a conflict edge and must be hit by $H$. The lack of a unique topological sort of an acyclic bipartite tournament breaks down this step as there may be a topological sort of the bipartite tournament such that a back edge is not a conflict edge. We notice that maintaining an addition subset of back edges $F$ that must be hit by $H$ helps in circumventing this issue. With this strategy in mind, we define the \textsc{Constrained Feedback Vertex Set} problem.
\begin{definition}[Constrained Feedback Vertex Set(CFVS)]
Let $T$ be a bipartite tournament with vertex subsets $M,P\subseteq V(T)$, edge set $F\subseteq E(T)$. A feedback vertex set $H$ of $T$ is called $(M,P,F)$-\emph{constrained} if $M\cap H=\emptyset$, $P\subseteq H$ and $H$ is a vertex cover for $F$.
\end{definition}
\smallskip

\noindent
\fbox{\parbox{\textwidth-\fboxsep}{
\textsc{Constrained Feedback Vertex Set} (CFVS)\\
\textbf{Input:} A bipartite tournament, vertex sets $M,P\subseteq V(T)$, edge set $F\subseteq E(T)$ and positive integer $k$.\\
\textbf{Parameter:} $k$\\
\textbf{Task:} determine whether $T$ has an $(M,P,F)$-constrained CFVS $H$ of size at most $k$.
}}

\noindent

\smallskip

In the rest of the paper, we assume that the size of the bipartite tournament is at most $O(k^3)$ as a bi-product of the kernelization algorithm (Lemma \ref{kernel}).
Given a topological sort $\pi$ of an acyclic bipartite tournament $T=(A,B,E)$, we denote $\pi_A$ to be the permutation of $A$ when $\pi$ is restricted to $A$. Similarly, $\pi_B$ denotes the permutation of $B$ when $\pi$ is restricted to $B$. Next, we define a property of a feedback vertex set of a bipartite tournament and while solving for \textsf{BTFVS}, we will look for solutions $H$ that have this property.
\begin{definition}[$M$-homogeneous]
	Let $T$ be a bipartite tournament and $k$ be a positive integer. Let $M\subseteq V(T)$ be a vertex subset such that $T[M]$ is acyclic. A feedback vertex set $H$ of size at most $k$ of $T$ is called $M$-homogeneous if there exists a topological sort $\pi$ of $T-H$ such that every subset of $10\log^3k$ consecutive vertices in $\pi_{A-H}$ or $\pi_{B-H}$ contains a vertex of $M$. 
\end{definition}
The algorithm for CFVS is primarily based on branching and often, given a CFVS instance, a family of CFVS instances with addition properties will be constructed. We abstract it out in the following definition.
\begin{definition}[$\gamma$-reduction]
	A $\gamma$-reduction is an algorithm that given a CFVS instance $(T,M,P,F,k)$ outputs in time $\gamma$ a family $\mathcal{C}:=\{(T,M,P_1,F_1,k),(T,M,P_1,F_1,k),\dots\}$ of size $\gamma$ of CFVS instances such that 
	\begin{description}
		\item[Forward direction] if $(T,M,P,F,k)$ has an $M$-homogeneous $(M,P,F)$-solution, then there exists an instance $(T,M,P_i,F_i,k)\in \mathcal{C}$ that has an $M$-homogeneous $(M,P_i,F_i)$ solution.
		\item[Backward direction] if there exists an instance $(T,M,P_i,F_i,k)\in \mathcal{C}$ that has an $(M,P_i,F_i)$-CFVS solution, then $(T,M,P,F,k)$ has an $(M,P,F)$-solution.	
		\end{description}
\end{definition}
Now, we construct a family of sets $\mathcal{M}$ such that if $(T,k)$ has a solution $H$ of size at most $k$, then there is a set $M\in \mathcal{M}$ such that $H$ is $M$-homogeneous, and hence we can restrict our attention to looking for feedback vertex sets which are $M$-homogeneous for some subset $M$.
\begin{lemma}\label{undeletable1}
There exists an algorithm that given a bipartite tournament $T$ and a positive integer $k$ outputs in time $\gamma$, a family $\mathcal{M}$ of size $\gamma$ of subsets of $V(T)$ for $\gamma=2^{O(\frac{k}{\log k})}$ such that for every feedback vertex set $H$ of size at most $k$ of $T$, there exists $M\in \mathcal{M}$ such that $H$ is $M$-homogeneous.
\end{lemma}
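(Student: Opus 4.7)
The plan is to combine the polynomial-time kernel (Lemma~\ref{kernel}) and the $4$-approximation algorithm (Lemma~\ref{approximation}) with a derandomised probabilistic construction based on the $t$-wise independent sample space of Theorem~\ref{twise}. First, after kernelisation we may assume $|V(T)| = O(k^3)$, and then we invoke Lemma~\ref{approximation} to either reject the instance or obtain a feedback vertex set $S$ with $|S|\le 4k$. Our family $\mathcal{M}$ will consist of subsets of $V(T)\setminus S$; since $T-S$ is acyclic, any such subset $M$ automatically satisfies the precondition that $T[M]$ is acyclic in the definition of $M$-homogeneous.

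For the construction, set $L = 10\log^3 k$ and apply Theorem~\ref{twise} with $n = |V(T)|$, $q = 2$, $t = L$ to obtain a family $\mathcal{H}$ of $L$-wise independent maps $h\colon V(T)\to \{0,1\}$ of size $O(n^L) = 2^{O(\log^4 k)}$, which is comfortably within the target bound $2^{O(k/\log k)}$. For each $h\in\mathcal{H}$ set $M_h = \{v\in V(T)\setminus S : h(v)=1\}$ and put $\mathcal{M} = \{M_h : h\in\mathcal{H}\}$. It remains to verify that for every feedback vertex set $H$ of size at most $k$, some $M_h\in\mathcal{M}$ witnesses $H$ as $M_h$-homogeneous.

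Fix such an $H$. The key step is a \emph{spreading claim}: there is a topological sort $\pi$ of $T-H$ in which every $L$-window of $\pi_{A-H}$ or $\pi_{B-H}$ contains at least $L/2$ vertices lying outside $S$. Granting this claim, fix any $L$-window $W$; its at least $L/2$ non-$S$ positions are distinct vertices of $V(T)\setminus S$, so by $L$-wise independence the probability that all of them are mapped to $0$ under a uniformly random $h\in\mathcal{H}$ is at most $2^{-L/2}$. A union bound over the at most $2|V(T)| = O(k^3)$ windows gives failure probability at most $O(k^3)\cdot 2^{-L/2} = o(1)$, so some $h\in\mathcal{H}$ succeeds.

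The chief obstacle is establishing the spreading claim. The idea is to build $\pi$ by taking the canonical sequence of $T-S$, deleting the at most $k$ vertices of $H\setminus S$, and then inserting the at most $4k$ vertices of $S\setminus H$ into the resulting order, placing each one in the widest slot permitted by its in- and out-neighbourhoods in $T-H$. Because $|V(T)| = O(k^3)$ dominates $kL$, these at most $4k$ insertions can be performed greedily so that consecutive $S$-vertices in $\pi_{A-H}$ are separated by at least $L$ non-$S$ vertices, and symmetrically for $\pi_{B-H}$. The structural properties of canonical sequences recorded in Lemmas~\ref{sequence} and \ref{refinement} are what allow this insertion to be carried out; handling the degenerate case where $|V(T)|$ is comparable to $kL$ requires a separate brute-force step whose cost fits within the $2^{O(k/\log k)}$ budget.
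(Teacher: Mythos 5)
Your construction differs from the paper's in a crucial way: the paper samples $Z$ from \emph{all} of $V(T)$ with a $t$-wise independent family with $q=\log^2 k$, and then brute-forces over the (w.h.p.\ small, $\leq 2k/\log^2 k$) intersection $\hat H = Z\cap H$, paying the $2^{O(k/\log k)}$ factor there; you instead try to avoid paying that factor by forcing $M\subseteq V(T)\setminus S$ for a $4$-approximate solution $S$. The argument then stands or falls with your ``spreading claim'', and that claim is false. As the paper itself records, a topological sort of an acyclic bipartite tournament is unique up to permuting false twins, so the canonical classes $V_1,V_3,V_5,\dots$ of $T-H$ appear as contiguous blocks in $\pi_{A-H}$ in a forced order; there is no freedom to ``insert each vertex of $S\setminus H$ into the widest slot permitted'' --- each such vertex's position relative to the non-$S$ vertices is determined by the arc orientations, not chosen by you. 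Since $|S\setminus H|$ can be as large as $4k\gg L=10\log^3 k$, it can happen that $\Omega(L)$ vertices of $S\setminus H$ are mutual false twins in $T-H$ (or are otherwise forced into consecutive positions), in which case \emph{every} topological sort of $T-H$ contains an $L$-window consisting entirely of $S$-vertices. No $M_h\subseteq V(T)\setminus S$ can hit such a window, so no member of your family witnesses $M$-homogeneity for that $H$. This is exactly the obstruction the paper's extra guessing step is designed to handle: the witness $M$ must be allowed to contain vertices of $S\setminus H$, so one cannot discard an approximate solution wholesale.

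A secondary point: even where your argument goes through, your sets $M_h$ may intersect $H$ arbitrarily (you only exclude $S$, not $H$). The literal statement of the lemma does not forbid this, but the paper's construction guarantees $M\cap H=\emptyset$ (by taking $M=Z\setminus\hat H$ with $\hat H=Z\cap H$), and this disjointness is what Lemma~\ref{undeletable} and the entire downstream algorithm rely on when treating $M$ as undeletable and concluding $P\subseteq H$. So even with a repaired spreading argument, your family would not serve the role this lemma plays in the paper.
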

\begin{proof}
	Using $T$ and $k$, we construct $\mathcal{M}$. Let $n=|V(T)|,t=10\log^3k,q=\log^2 k$. As the first step, the algorithm uses Theorem \ref{twise} to construct a family of functions $H_{n,t,q}$ from $[n]$ to $[q]$. Next, the algorithm computes a family $\mathcal{Z}$ of $t$-wise independent subsets of $V(T)$: For each $f\in H_{n,t,q}$, let $Z_f:=\{v_i\in V(T)\mid f(i)=1\}$. Add $Z$ to $\mathcal{Z}$. In the next step, for every subset $Z\in \mathcal{Z}$, compute the family of subsets $\mathcal{M}_Z:=\{M:=Z\setminus \hat{H}\mid \hat{H}\subseteq Z, |\hat{H}|\leq \frac{2k}{\log^2 k}\}$. Finally, output $\mathcal{M}:=\bigcup_{Z\in \mathcal{Z}}\mathcal{M}_{Z}$.

To argue about the correctness of the algorithm, first, we check that the size of $\mathcal{M}$ computed by the above algorithm is consistent with the claim in the lemma. Clearly, $|\mathcal{M}|\leq |H_{n,t,q}|\times |\mathcal{M}_Z|=O(n^t)O((k^3)^{\frac{2k}{\log^2k}})=2^{O(\frac{k}{\log k})}$. We need to show that for every feedback vertex set $H$ of size $k$ and for every topological sort $\pi$ of $T-H$, there exists a function $f\in H_{n,t,q}$ and a set $\hat{H}\subseteq V(T)$ such that $M:=Z\setminus \hat{H}$ satisfies the required properties. Fix a feedback vertex $H$ of size $k$ and a topological sort $\pi$ of $T-H$. First, we prove the following claim:
\begin{claim}
	If we pick $f$ from $H_{n,t,q}$ uniformly at random, then with non-zero probability, the following two events happen:
(i) for every set of $10\log^3 k$ consecutive vertices in $\pi_{A-H}$ or $\pi_{B-H}$, there is a vertex in $Z_f$,
(ii) $|Z_f\cap H|\leq \frac{2k}{\log^2 k}$.
\end{claim}
\begin{proof}
By $t$-wise independence of $H_{n,t,q}$, the probability that no vertex is picked from $t$ consecutive vertices in $\pi_{A-H}$ or $\pi_{B-H}$ is at most $(1-\frac{1}{q})^{t}$. Let $\mathcal{A}_1$ be the event that at least one set of $t$-consecutive vertices either in $\pi_{A-H}$ or in $\pi_{B-H}$ does not contain any vertex from $Z$. Since there at most $n$ sets of $t$-consecutive vertices, by union bound, the probability that event $\mathcal{A}_1$ happens is at most $n\times (1-\frac{1}{q})^{t}\leq Ck^4\times (1-\frac{1}{\log^2 k})^{10\log^3 k}=Ck^4\times \frac{1}{k^{10}}\leq\frac{1}{k^5}$.
Let $\mathcal{A}_2$ be the event that at least $\frac{2k}{\log^2 k}$ vertices of $H$ are in $Z$. The expected number of vertices of $H$ that belong to $Z$ is $k\times \frac{1}{q}=\frac{k}{\log^2 k}$. Therefore, by Markov's inequality, the probability that the event $\mathcal{A}_2$ occurs is at most $\frac{1}{2}$. By union bound the probability that at least one of the events $\mathcal{A}_1$ or $\mathcal{A}_2$ happen is at most $\frac{1}{k^5}+\frac{1}{2}$. Hence, the probability that none of $\mathcal{A}_1$ and $\mathcal{A}_2$ is at least $1-(\frac{1}{k^5}+\frac{1}{2})>0$, thereby implying the claim.
\end{proof}
Hence, the set of functions satisfying the properties in the above claim is non-empty. Let $f$ be such a function. Since, $\mathcal{M}_Z$ is the collection of sets $Z\setminus \hat{H}$ such that $|\hat{H}|\leq \frac{2k}{\log^2 k}$, there exists a choice $\hat{H}$ such that $\hat{H}=Z\cap H$. Hence, $M:=Z\setminus \hat{H}$ satisfies the required properties.

For the runtime of the algorithm, $H_{n,t,q}$ can be constructed in $O(n^t)$ time. For each function $f\in H_{n,t,q}$, the set $Z$ can be obtained in $O(n)$ time. For each $Z$, $\mathcal{M}_Z$ can be obtained in $O(k^{\frac{2k}{\log^2 k}})$ time. Hence, the runtime of the algorithm is $O(n^t)\cdot n\cdot O(k^{\frac{2k}{\log^2 k}})=2^{O(\frac{k}{\log k})}$. 
\end{proof}
\begin{lemma}\label{undeletable}
There exists an algorithm that given a \textsf{BTFVS} instance $(T,k)$ outputs in time $\gamma$, a family $\mathcal{C}:=\{(T,M_1,P_1,\emptyset,k),(T,M_2,P_2,\emptyset,k),\dots\}$ of size $\gamma$ of CFVS instances for $\gamma=2^{O(\frac{k}{\log k})}$ such that
(a) if $(T,k)$ has a solution $H$ of size at most $k$, then $\mathcal{C}$ has a CFVS instance $(T,M,P,\emptyset,k)$ that has an $M$-homogeneous solution of size at most $k$ and (b) if $\mathcal{C}$ has a $(M,P,\emptyset)$-constrained solution, then $(T,k)$ has a feedback vertex set of size at most $k$.
\end{lemma}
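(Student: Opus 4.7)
The plan is to lift Lemma \ref{undeletable1} directly into the CFVS framework with a minimal bookkeeping step. First, I run the algorithm of Lemma \ref{undeletable1} on $(T,k)$ to obtain the family $\mathcal{M}$ of size $2^{O(k/\log k)}$. For each $M \in \mathcal{M}$ I test in polynomial time whether $T[M]$ is acyclic (by Lemma \ref{triangle} it suffices to check that $T[M]$ contains no $4$-cycle); if the test passes, I add the CFVS instance $(T, M, \emptyset, \emptyset, k)$ to $\mathcal{C}$, otherwise I discard $M$. Clearly $|\mathcal{C}| \leq |\mathcal{M}| = 2^{O(k/\log k)}$, and since the kernelized input has $|V(T)| = O(k^3)$, the running time is dominated by the call to Lemma \ref{undeletable1} and is $2^{O(k/\log k)}$.

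For the forward direction, suppose $(T,k)$ has a feedback vertex set $H$ with $|H| \leq k$. By Lemma \ref{undeletable1} there exists $M \in \mathcal{M}$ such that $H$ is $M$-homogeneous; moreover, inspecting the construction of $M$ in the proof of that lemma (where $M = Z \setminus \hat{H}$ with the guessed $\hat{H}$ equal to $Z \cap H$) shows that $M \cap H = \emptyset$, so $M \subseteq V(T) - H$. Since $T - H$ is acyclic, $T[M]$ is acyclic as well, and hence this $M$ survives the filtering step. The instance $(T, M, \emptyset, \emptyset, k) \in \mathcal{C}$ then admits $H$ itself as an $M$-homogeneous $(M, \emptyset, \emptyset)$-constrained solution of size at most $k$: the disjointness $M \cap H = \emptyset$ was just established, while the conditions $P \subseteq H$ and ``$H$ is a vertex cover for $F$'' hold vacuously as $P = F = \emptyset$. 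The backward direction is immediate: a $(M, \emptyset, \emptyset)$-constrained solution $H$ of $(T, M, \emptyset, \emptyset, k)$ of size at most $k$ is by definition a feedback vertex set of $T$ of size at most $k$, so it witnesses a solution to $(T,k)$.

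Because Lemma \ref{undeletable1} performs all of the probabilistic/$t$-wise-independent combinatorial work, there is no substantive obstacle here; the only delicate point is that the forward direction must produce a surviving $M$, i.e.\ one for which $T[M]$ is acyclic, and this is exactly guaranteed by the disjointness $M \cap H = \emptyset$ built into the construction of Lemma \ref{undeletable1}. In other words, the acyclicity filter is used only to normalize the instances in $\mathcal{C}$ so that $M$-sequences and $M$-homogeneity are well-defined objects for the downstream algorithm, and it never rejects the ``correct'' $M$ associated to an actual solution.
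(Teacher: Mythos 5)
Your proposal does prove the statement as literally written: invoking Lemma~\ref{undeletable1}, filtering out the $M$ with $T[M]$ cyclic, and emitting $(T,M,\emptyset,\emptyset,k)$ satisfies both directions, since the ``good'' $M$ (the one with $\hat{H}=Z\cap H$) is disjoint from $H$ and hence induces an acyclic subtournament, and conditions $P\subseteq H$ and vertex-coverage of $F$ are vacuous. However, you diverge from the paper on the one substantive point of this lemma, namely the choice of $P$, and the divergence defeats the lemma's purpose. The paper sets $P$ to be the set of vertices of $V(T)\setminus M$ that form a square with $M$; this set is forced into any solution disjoint from $M$ (if $v\notin H$ formed a square with $M\subseteq V(T)-H$, that square would survive in $T-H$, contradicting Lemma~\ref{triangle}), so $P\subseteq H$ and the forward direction still holds. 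Crucially, this choice makes $T-P$ $M$-consistent: every vertex $v$ of $T-P$ has $T[M\cup\{v\}]$ acyclic. That is exactly the hypothesis under which the $M$-sequence of $T-P$ exists (Lemma~\ref{Mconsistent}), and every downstream lemma (\ref{blocks}, \ref{short}, \ref{disjointmatch}, \dots) operates on the $M$-sequence of $T-P$. Your acyclicity filter on $T[M]$ alone does not provide this: with $P=\emptyset$ the graph $T$ itself need not be $M$-consistent, so the instances you emit are ones on which the rest of the pipeline is undefined.

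So the gap is not in the logic of your argument for the stated properties (a) and (b), but in treating $P$ as dispensable bookkeeping. The correct move is essentially free: replace your filter by ``let $P$ be all vertices forming a directed $4$-cycle with $M$'' and note the two facts above ($P\subseteq H$ for any $M$-disjoint solution $H$, and $T-P$ is $M$-consistent). As a minor point, your filtering step is also unnecessary even for your own argument, since only the surviving good $M$ matters for the forward direction and discarded instances cost nothing in the backward direction.
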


\begin{definition}[boundary, vicinity]
	Let $T$ be an acyclic bipartite tournament. Let $M$ be any subset of vertices and $\pi$ be a topological sort of $T$. Let $(X_1,Y_1,\dots)$ be the $M$-sequence of $T$. For any block $X_i\cup Y_i$, the set of vertices in $X_i$ before the first $M$-vertex is called the left boundary of the block and the set of vertices in $X_i$ after the last $M$-vertex is called the right boundary of the block. The vicinity of the block $X_i\cup Y_i$ is the union of the boundaries of $X_i\cup Y_i$, the right boundary of $X_{i-1}\cup Y_{i-1}$, $Y_i$ and the left boundary of $X_{i+1}\cup Y_{i+1}$.  
\end{definition}
\begin{lemma}\label{ratio}
	Let $H$ be an $M$-homogeneous solution for a bipartite tournament $T$. Then, in the $M$-sequence $(X_1,Y_1,X_2,Y_2,\dots)$ of $T-H$, for each $i$, $\frac{|X_i|}{|X_i\cap M|}\leq 20\log^3k$ and $|Y_i|\leq 10\log^3k$. Further, there exists a topological sort of $T-H$ such that the size of each boundary of any block is at most $10\log^3 k$ and the size of the vicinity of any block is at most $30\log^3 k$.
\end{lemma}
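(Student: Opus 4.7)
The plan is to use the topological sort $\pi$ of $T-H$ guaranteed by the definition of $M$-homogeneity as the witness for the entire lemma, and to translate its density property---that every $10\log^3 k$ consecutive vertices of $\pi_{A-H}$ or of $\pi_{B-H}$ contain a vertex of $M$---into each of the required bounds. A preparatory step is to argue that $\pi$ is compatible with the $M$-sequence $(X_1,Y_1,X_2,Y_2,\dots)$: since $T-H$ is acyclic, Lemma~\ref{refinement} gives that its canonical sequence refines the $M$-sequence, and any topological sort respects the canonical sequence, so the blocks appear in $\pi$ in order. Combined with Lemma~\ref{Msequence}(iv), this shows that the restrictions $\pi_{A-H}$ and $\pi_{B-H}$ list the blocks as $X_1,Y_2,X_3,Y_4,\dots$ and $Y_1,X_2,Y_3,X_4,\dots$ respectively, each block occupying a contiguous interval in its respective restriction.

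To bound $|Y_i|\leq 10\log^3 k$, observe that $Y_i\cap M=\emptyset$ by Lemma~\ref{Msequence}(iii) and that $Y_i$ sits as a contiguous interval in one of the two restrictions; if $|Y_i|\geq 10\log^3 k$ the first $10\log^3 k$ vertices of that interval would be a window free of $M$-vertices, contradicting $M$-homogeneity. For the ratio bound, first establish that $X_i\cap M=X_i'$: for any $j\neq i$ and $u\in X_j'$, the in- and out-neighborhoods of $u$ inside $M$ differ from those of any vertex of $X_i'$---using a canonical set of $T[M]$ of appropriate index and parity as a witness---so $u$ fails to be $(M,X_i')$-equivalent. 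Inside the contiguous occurrence of $X_i$, the $|X_i'|$ vertices of $X_i'$ split the non-$M$ vertices of $X_i$ into at most $|X_i'|+1$ runs, each of length at most $10\log^3 k-1$ by $M$-homogeneity. Therefore $|X_i|\leq |X_i'|+(|X_i'|+1)(10\log^3 k-1)$, and since $|X_i'|\geq 1$ this yields $|X_i|/|X_i\cap M|\leq 20\log^3 k$.

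Finally, the required topological sort can be taken to be $\pi$ itself. The left (resp.\ right) boundary of $X_i\cup Y_i$ is the leading (resp.\ trailing) run of non-$M$ vertices inside $X_i$, and each has size at most $10\log^3 k$ by the run-length bound above. The main obstacle---and the reason the vicinity bound is $30\log^3 k$ rather than the naive $50\log^3 k$---is noticing that three of the five pieces composing the vicinity line up as a single gap inside one of $\pi_{A-H},\pi_{B-H}$. For odd $i$ (the even case is symmetric), the blocks $i-1,i,i+1$ contribute to $\pi_{B-H}$, in order and with no other vertices between them, exactly the sets $X_{i-1}$, $Y_i$, $X_{i+1}$. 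Concatenating the right boundary of $X_{i-1}$, all of $Y_i$, and the left boundary of $X_{i+1}$ therefore produces a single contiguous stretch of $\pi_{B-H}$ containing no $M$-vertex, whose total length is at most $10\log^3 k-1$ by $M$-homogeneity. Adding the two boundaries of $X_i$, each of size at most $10\log^3 k$, yields the desired vicinity bound $30\log^3 k$.
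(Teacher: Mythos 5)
Your proof is correct and takes essentially the same route as the paper's: the paper's entire proof is the one-line observation that the canonical sequence of $T-H$ refines the $M$-sequence and is preserved by any topological sort, which is exactly your preparatory step. You then supply the counting the paper declares ``immediate''---the run-length bounds from $M$-homogeneity, the identification $X_i\cap M=X_i'$, and the merging of the three $B$-side vicinity pieces into a single $M$-free window to obtain $30\log^3 k$ rather than $50\log^3 k$.
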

\begin{proof}
	The lemma follows immediately after observing that the canonical sequence of $T-H$ is a refinement of $M$-sequence of $T-H$ and any topological sort of $T-H$ preserves the canonical sequence of $T-H$.
\end{proof}
\begin{definition}[Back edge]
	Let $T$ be an $M$-consistent bipartite tournament for some $M\subseteq V(T)$ and $(X_1,Y_1,X_2,Y_2\dots)$ be the $M$-sequence of $T$. An edge $u_iu_j\in E(T)$ is called a \emph{back} edge if  $u_i\in X_i\cup Y_i$, $u_j\in X_j\cup Y_j$ and $i-j\geq 1$. Furthermore, $u_iu_j$ is called \emph{short back} edge if $i-j=1$ and it is called \emph{long back} edge if $i-j\geq 2$. 
\end{definition}
\begin{lemma}\label{longback}
Any feedback vertex set disjoint from $M$ must contain at least one end point of a long back edge.
\end{lemma}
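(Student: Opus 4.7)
The plan is to proceed by contradiction. Assume that $u_iu_j$ is a long back edge ($i-j\ge 2$) and that $u_i,u_j\notin H$. Since $H\cap M=\emptyset$, we have $M\cup\{u_i,u_j\}\subseteq V(T)\setminus H$, so $T[M\cup\{u_i,u_j\}]$ is acyclic. By Lemma~\ref{triangle} it contains no square, and hence it suffices to exhibit a square of the form $u_i\to u_j\to m_1\to m_2\to u_i$ with $m_1,m_2\in M$ to reach a contradiction.

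Because $u_iu_j$ is an edge of a bipartite tournament, $u_i$ and $u_j$ lie in opposite parts; say $u_i\in A$ and $u_j\in B$. Let $(X_1',\dots,X_l')$ be the canonical sequence of $T[M]$. Combining Lemma~\ref{Msequence}(iv) with $X_k'\subseteq X_k$ gives $X_k'\subseteq A$ for odd $k$ and $X_k'\subseteq B$ for even $k$. Since $1\le j<i\le l$, the vertex $u_j$ is not $M^+$-universal and $u_i$ is not $M^-$-universal. Lemma~\ref{Mconsistent} together with the definitions of $(M,X_k')$-equivalent and $(M,X_k')$-conflicting then imply: $N^+(u_j)\cap M$ contains every $A$-vertex of $X_k'$ for each $k>j$, and moreover $N^+(u_j)\cap X_j'\ne\emptyset$ whenever $u_j\in Y_j$ (covering both the conflicting and the $M^-$-universal sub-cases); symmetrically, $N^-(u_i)\cap M$ contains every $B$-vertex of $X_k'$ for each $k<i$, and $N^-(u_i)\cap X_i'\ne\emptyset$ whenever $u_i\in Y_i$.

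The problem therefore reduces to choosing indices $k_1<k_2$ with $k_1$ odd, $k_2$ even, and picking $m_1\in N^+(u_j)\cap X_{k_1}'$, $m_2\in N^-(u_i)\cap X_{k_2}'$: the canonical sequence of $T[M]$ then forces $m_1\to m_2$, which closes the square. When $i-j\ge 3$, this is immediate from the odd/even alternation of block sides, since valid odd $k_1$ and even $k_2$ exist strictly between $j$ and $i$. The main obstacle I anticipate is the tight case $i-j=2$, where $i$ and $j$ share parity and only one block index lies strictly between them. If $i,j$ are both odd, Lemma~\ref{Msequence}(iv) forces $u_j\in Y_j\subseteq B$, so $N^+(u_j)\cap X_j'\ne\emptyset$ and I take $k_1=j$, $k_2=j+1$. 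If both are even, symmetrically $u_i\in Y_i\subseteq A$ and I take $k_1=j+1$, $k_2=i$, using $N^-(u_i)\cap X_i'\ne\emptyset$. In either subcase the required square exists, contradicting acyclicity of $T[M\cup\{u_i,u_j\}]$ and completing the argument.
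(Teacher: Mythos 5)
Your proof is correct and follows essentially the same route as the paper's: both arguments exhibit a square formed by the long back edge together with two undeletable vertices (two $M$-vertices from intermediate blocks, or one such vertex plus a witness neighbour in $X_j'$ or $X_i'$ supplied by the conflicting/universal case), contradicting acyclicity of $T[M\cup\{u_i,u_j\}]$. One small slip to patch: your claim that for $i-j\ge 3$ suitable odd $k_1$ and even $k_2$ always exist \emph{strictly} between $j$ and $i$ fails when $i-j=3$ and $j$ is odd (there $j+1$ is even and $j+2$ is odd, i.e.\ in the wrong order), but in exactly that sub-case Lemma~\ref{Msequence}(iv) forces $u_j\in Y_j$ and $u_i\in Y_i$, so the fallback $k_1=j$, $k_2=i$ --- the same device you already use for $i-j=2$ --- closes the square.
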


As we know that in the $M$-sequence of $T-H$, there may be back edges. Since $T-H$ is acyclic, these edges do not participate in any cycle. We call them \emph{simple} back edges. But, in the $M$-sequence of $T-P$, we may have back edges that form a cycle with two vertices of $M$ and hence at least one end-point of these edges must belong to $H$. We call them \emph{conflict} back edges. Hence, every back edge that is not a simple back edge is a conflict back edge. By Lemma \ref{longback}, every long back edge is a conflict back edge. The $M$-homogeneity of $H$ and Lemma \ref{ratio} implies the following lemma.
\begin{lemma}\label{simple}
	Let $H$ be an $M$-homogeneous solution for $T$. Then, there exists a permutation of $T-H$ such that the number of simple back edges between any consecutive blocks in the $M$-sequence of $T-H$ is at most $200\log^6 k$.
\end{lemma}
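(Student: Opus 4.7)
My plan is to work with the topological sort $\pi$ of $T-H$ furnished by Lemma \ref{ratio}, in which every block boundary has size at most $10\log^3 k$ and $|Y_i|\leq 10\log^3 k$ for every $i$. I will then bound the simple (short) back edges between any two consecutive blocks $X_i\cup Y_i$ and $X_{i+1}\cup Y_{i+1}$ by a direct case analysis on where the endpoints can lie in $\pi$.

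I will first reduce from four potential configurations to two by using the partite-set pattern of Lemma \ref{Msequence}: for odd $i$ both $X_{i+1}$ and $Y_i$ lie in $B$ while both $Y_{i+1}$ and $X_i$ lie in $A$ (the even case is symmetric), so the only short back edges are of type (a) $X_{i+1}\to X_i$ or type (d) $Y_{i+1}\to Y_i$. Type (d) is immediate: at most $|Y_i|\cdot|Y_{i+1}|\leq 100\log^6 k$ edges.

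The heart of the proof is the type (a) bound, where I exploit $M$-equivalence to pin both endpoints into very small boundary regions. Let $X_i'$ and $X_{i+1}'$ be the corresponding sets in the canonical sequence of $T[M]$. Since $X_i'$ and $X_{i+1}'$ lie in opposite partite sets and $X_i'$ precedes $X_{i+1}'$ in that canonical sequence, every vertex of $X_{i+1}'$ has every vertex of $X_i'$ as an in-neighbor in $T[M]$ (by Lemma \ref{sequence}). By $M$-equivalence this property is inherited by every $v\in X_{i+1}$, forcing $v$ to come after every $X_i'$-vertex in $\pi$; symmetrically every $u\in X_i$ must precede every $X_{i+1}'$-vertex in $\pi$. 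Combined with $v<u$ in $\pi$ (forced by the edge $v\to u$ and acyclicity of $T-H$), both endpoints are squeezed strictly between the last $X_i'$-vertex and the first $X_{i+1}'$-vertex of $\pi$. It follows that $v$ lies in the left boundary of $X_{i+1}$ and $u$ in the right boundary of $X_i$, giving at most $(10\log^3 k)^2 = 100\log^6 k$ type (a) edges and the claimed overall bound of $200\log^6 k$.

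I expect the main obstacle to be this type (a) argument: a priori non-$M$ vertices lying deep inside a block could in principle participate in short back edges, but the key insight is that $M$-equivalence fixes the position of every non-$M$ vertex relative to the $M$-vertices of the neighboring block, thereby confining both endpoints to the small boundary regions whose sizes are controlled by Lemma \ref{ratio}. A minor technical point worth verifying is that no vertex belonging to a different block (the $Y_j$'s, non-$M$ vertices of $X_{i-1}$ or $X_{i+2}$, and universal vertices) can sit in the gap between the last $X_i'$- and first $X_{i+1}'$-vertex; this again follows from the same $M$-equivalence argument together with the conflicting property of the $Y_j$'s, which forces them to lie strictly within their own $X_j'$-range.
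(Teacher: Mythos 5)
Your proof is correct and follows exactly the route the paper intends: the paper states Lemma~\ref{simple} without proof, asserting only that it follows from $M$-homogeneity and Lemma~\ref{ratio}, and your argument --- using the partite-set pattern of Lemma~\ref{Msequence} to reduce to the two edge types $X_{i+1}\to X_i$ and $Y_{i+1}\to Y_i$, and then using $M$-equivalence plus acyclicity to confine both endpoints of a type-(a) edge to the right boundary of $X_i$ and the left boundary of $X_{i+1}$, each of size at most $10\log^3 k$ --- is a correct and complete filling-in of that one-line justification. Your closing worry about foreign vertices sitting in the gap is not actually needed, since the boundaries are by definition subsets of $X_i$ and $X_{i+1}$ and the count is taken over those sets directly.
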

Hence, if in the $M$-sequence of $T-P$, there are more than $200\log^6k$ back edges between any consecutive pair of blocks, then we can branch on the choices of conflict back edges to be hit by $H$. The next definition and lemma captures this intuition.
\begin{definition}[weakly-coupled]
	An instance $(T,M,P,F,k)$ of CFVS is said to be weakly-coupled if in the $M$-sequence 
	 of $T-P$, $F$ is a subset of conflict back edges containing all long back edges such that   
	 the matching in back edges between any pair of consecutive blocks in $T-P-F$ is at most $201\log^{8}k$ . 
\end{definition}
Since we can find a matching in bipartite graphs in polynomial time, it can be checked in polynomial time whether a given CFVS instance $(T,M,P,F,k)$ is weakly-coupled or not.
\begin{lemma}\label{short}
	There exists a $\gamma$-reduction from a CFVS instance $(T,M,P,\emptyset,k)$ to a family $\mathcal{C}_2=\{(T,M,P,F_1,k),(T,M,P,F_2,k)\dots\}$ for $\gamma=2^{O(\frac{k}{\log k})}$ such that every instance in $\mathcal{C}_2$ is weakly-coupled.
\end{lemma}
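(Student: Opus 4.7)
The plan is to construct $F$ in two stages. First, I initialize $F_0$ to contain every long back edge in the $M$-sequence of $T-P$; by Lemma \ref{longback} each such edge is a conflict back edge, so $F_0$ already meets the ``contains all long back edges'' clause of the weakly-coupled condition. Second, for each pair of consecutive blocks $(i,i+1)$ I compute the maximum matching $L_i$ of short back edges $E_i$ between the two blocks; pairs with $L_i\le 201\log^8 k$ need no further attention, and I call the remaining pairs \emph{bad}. Only bad pairs trigger branching.

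At each bad pair I enumerate every subset $S_i\subseteq E_i$ with $|S_i|\le 200\log^6 k$, treating $S_i$ as the candidate set of simple back edges kept out of $F$, and extend the current $F$ by $E_i\setminus S_i$. I then verify in polynomial time that every edge in $E_i\setminus S_i$ is a conflict back edge---for each edge $uv$, by checking whether $T[M\cup\{u,v\}]$ contains a cycle---and discard any branch that fails this check. In a surviving branch, the back edges remaining in $T-P-F$ between blocks $i$ and $i+1$ form a subgraph of $S_i$ with matching at most $|S_i|\le 200\log^6 k<201\log^8 k$, so the output instance is weakly-coupled. Forward correctness uses Lemma \ref{simple}: for an $M$-homogeneous solution $H$, the set of simple back edges at each bad pair has size at most $200\log^6 k$, and taking $S_i$ to equal this set yields a branch in which $F\setminus F_0$ consists only of conflict back edges, all covered by $H$. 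Backward correctness is immediate, since an $(M,P,F)$-solution is trivially an $(M,P,\emptyset)$-solution.

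For the size bound I first bound the number of bad pairs. By K\"onig's theorem, a bad pair has minimum vertex cover exceeding $201\log^8 k$; for the correct $H$, $(H\setminus P)\cap(X_i\cup Y_i\cup X_{i+1}\cup Y_{i+1})$ together with the $\le 400\log^6 k$ endpoints of the simple back edges forms a vertex cover of $E_i$, so $H\setminus P$ must contribute at least $201\log^8 k-400\log^6 k=\Omega(\log^8 k)$ vertices per bad pair. Since each vertex of $H$ lies in at most two consecutive-block pairs, the number of bad pairs is $O(k/\log^8 k)$; whenever this threshold is exceeded the algorithm safely outputs the empty family, since no $M$-homogeneous $H$ of size $\le k$ can exist in that case. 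Each bad pair contributes at most $2^{O(\log^7 k)}$ branches, enumerating subsets of size $\le 200\log^6 k$ from at most $|E_i|=O(k^6)$ edges (the $O(k^6)$ bound following from the $O(k^3)$-vertex kernel of Lemma \ref{kernel}), and multiplying yields $(2^{O(\log^7 k)})^{O(k/\log^8 k)}=2^{O(k/\log k)}=\gamma$ output instances in total.

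The main obstacle I expect is rigorously establishing that in the intended branch $E_i\setminus S_i$ consists only of conflict back edges. A short back edge of the $M$-sequence of $T-P$ whose endpoint lies in $H\setminus P$ can a priori fail to form a cycle with two $M$-vertices, so the forward-direction argument must either rule this case out using the structural dichotomy between short back edges running between $X$-sub-blocks versus between $Y$-sub-blocks (only the latter are forced to be conflict), or else accommodate such edges by letting $S_i$ absorb them while still respecting the $200\log^6 k$ size budget imposed by Lemma \ref{simple}.
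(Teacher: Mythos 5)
Your construction is essentially the paper's: force every long back edge into $F$ (via Lemma~\ref{longback}), identify the consecutive-block pairs whose short-back-edge matching exceeds $201\log^8 k$, and for those pairs guess which few short back edges are \emph{simple} (excluded from $F$), with Lemma~\ref{simple} supplying the $200\log^6 k$ budget and the matching argument bounding the number of such pairs by $O(k/\log^8 k)$. The only organizational difference is that you enumerate a subset $S_i$ of size at most $200\log^6 k$ separately at each bad pair, whereas the paper pools all short back edges of all bad pairs into one set $\mathcal{L}'$ and enumerates a single excluded subset $B$ of size at most $\frac{2k}{\log^2 k}$ (which is exactly your per-pair budget times the number of bad pairs). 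Both give $2^{O(k/\log k)}$, and your K\"onig-based count of bad pairs is the same argument as the paper's ``at least $201\log^8k-200\log^6k$ matched edges must each be hit by a distinct vertex of $H$.''

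The one genuine problem is the verification step you add --- discarding a branch if some edge $uv\in E_i\setminus S_i$ fails to close a cycle in $T[M\cup\{u,v\}]$ --- and you correctly identify it as the weak point. A short back edge between the $M$-sub-blocks $X_{j+1}$ and $X_j$ whose endpoints are both $(M,\cdot)$-equivalent does \emph{not} in general lie on a $4$-cycle with two $M$-vertices, even when one of its endpoints belongs to $H$; so in the intended branch (where $S_i$ is exactly the set of edges with both endpoints outside $H$) your check can reject the instance and forward correctness fails. Neither of your two proposed repairs is the right one: absorbing such edges into $S_i$ has no size bound from Lemma~\ref{simple}, and the $X$-versus-$Y$ dichotomy does not rescue the $X$-to-$X$ edges. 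The paper's resolution is simply not to demand a per-edge cycle certificate: operationally a ``conflict back edge'' is any back edge that is not simple (equivalently, any back edge with an endpoint in $H$), the forward direction only needs $H$ to be a vertex cover of $F$, and that holds automatically for your choice of $S_i$. So drop the per-edge check and filter only on the polynomial-time-checkable residual-matching condition; with that change your proof matches the paper's.
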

\begin{definition}[matched]
	An instance $(T,M,P,F,k)$ of CFVS is said to be matched if $F\cap E(T-P)$ forms a matching. 
\end{definition}
Note that it can be checked in polynomial time whether a given CFVS instance $(T,M,P,F,k)$ is matched or not.
\begin{lemma}\label{disjoint}
There exists a $\gamma$-reduction from a weakly-coupled CFVS instance $(T,M,P,F,k)$ to $\mathcal{C}_3:=\{(T,M,P_1,F,k),(T,M,P_2,F,k),\dots\}$ for $\gamma\leq 1.6181^k$ such that  $\mathcal{C}_3$ is weakly-coupled and matched. In addition, for each $|P_i|=s\leq k$, $\mathcal{C}_3$ has at most $1.618^{s}$ CFVS instances.
\end{lemma}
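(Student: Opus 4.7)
The plan is to perform a standard Fibonacci-style branching on pairs of $F$-edges sharing a common endpoint, mirroring the classical $O(\phi^k)$ algorithm for \textsc{Vertex Cover}, where $\phi=(1+\sqrt{5})/2$. Starting from a weakly-coupled instance $(T,M,P,F,k)$, I would first check whether $F\cap E(T-P)$ already forms a matching; if so, return the singleton family $\{(T,M,P,F,k)\}$. Otherwise pick any vertex $v\in V(T)\setminus P$ incident to at least two edges $vu_1,\dots,vu_d$ of $F\cap E(T-P)$, where $d\ge 2$. Since any $(M,P,F)$-CFVS $H$ must hit every $vu_i$, it must satisfy $v\in H$ or $\{u_1,\dots,u_d\}\subseteq H$, so I would recurse on the two sub-instances $(T,M,P\cup\{v\},F,k)$ and $(T,M,P\cup\{u_1,\dots,u_d\},F,k)$, aborting any branch in which $|P|$ exceeds $k$.

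The correctness of the $\gamma$-reduction would then be immediate in both directions: in the forward direction an $M$-homogeneous $(M,P,F)$-solution $H$ remains an $M$-homogeneous solution of at least one of the two children (since $M$ is unchanged), and in the backward direction any $(M,P_i,F)$-solution of a child is also an $(M,P,F)$-solution of the parent because $P\subseteq P_i$ and $F$ is unchanged. For the size bound, the recurrence $T(k)\le T(k-1)+T(k-d)$ with $d\ge 2$ is worst at $d=2$, which gives $T(k)\le T(k-1)+T(k-2)$ and hence $|\mathcal{C}_3|\le \phi^k<1.6181^k$. For the refined count, each root-to-leaf path corresponds to an ordered composition of $|P_i|-|P|\le s$ into parts of size $1$ (Branch~1) or $\ge 2$ (Branch~2), and a routine Fibonacci counting argument bounds the number of such compositions by $1.618^s$.

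The hard part, and the step I expect to require the most care, is verifying that enlarging $P$ preserves the weakly-coupled property, so that every output instance is again weakly-coupled. My plan is to exploit the fact established by Lemma~\ref{Mconsistent} that the classification of each vertex with respect to $M$ -- whether it is $(M,X_i')$-equivalent, $(M,X_i')$-conflicting, or $M$-universal -- depends only on that vertex's adjacencies to $M$, and is independent of which other vertices of $T$ survive. Consequently, adding a vertex $v$ to $P$ merely deletes $v$ from its block of the $M$-sequence while leaving the block index of every surviving vertex unchanged. It follows that every back edge (respectively long back edge) of the $M$-sequence of $T-P_i$ was already a back edge (respectively long back edge) of the $M$-sequence of $T-P$, so $F$ continues to contain all long back edges; and the matching in back edges between consecutive blocks of $T-P_i-F$ can only shrink when further vertices are deleted, so the $201\log^{8}k$ bound is preserved. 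This completes the verification that every instance of $\mathcal{C}_3$ is simultaneously weakly-coupled and matched, as required.
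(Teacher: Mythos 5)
Your proposal is correct and follows essentially the same route as the paper: a Fibonacci-style vertex-cover branching on a vertex of degree at least $2$ in the graph $(V(T)\setminus P,\,F\cap E(T-P))$, with the recurrence $g(k)\le g(k-1)+g(k-2)$ giving the $1.618^{s}$ bound per solution size. The only difference is that where you prove that enlarging $P$ preserves the weakly-coupled property (via the stability of the $M$-sequence under vertex deletion outside $M$), the paper simply tests each returned instance in polynomial time and discards those that are not weakly-coupled and matched; your explicit preservation argument is a sound, and arguably cleaner, substitute for that filtering step.
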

\begin{definition}[LowBlockDegree]
	An instance $(T,M,P,F,k)$ of CFVS is said to be LowBlockDegree if in the $M$-sequence $(X_1,Y_1,X_2,Y_2,\dots)$ of $T-P$, \textsf{long}$(T,M,P)\subseteq F$ and for every set $X_i\cup Y_i$, at most $201\log^{10}k$ edges of $F\setminus E(T-P)$ are incident on $X_i\cup Y_i$.
	\end{definition}
Note that it can be checked in polynomial time whether a given CFVS instance $(T,M,P,F,k)$ is LowBlockDegree or not.
\begin{definition}[$X$-preferred vertex cover]
	Given a bipartite graph $G$ a set of vertices $X\subseteq V(T)$ and a set of edges $Q\subseteq E(G)$ such that $Q$ is a matching in $G$, a minimum vertex cover $C$ of $Q$ is called $X$-vertex cover of $Q$ if for every edge $e\in Q$ such that $e$ has exactly one endpoint in $X$, $C$ contains the endpoint of $e$ in $V(G)\setminus X$.
\end{definition}
Let $T:=(A,B,E)$ be a bipartite tournament and let $X\subseteq A$. Let $\pi:=(v_1,v_2,\dots,v_l)$ be a permutation of $X$. A vertex $v\in B$ is called inconsistent with $\pi$, if there is no index $i$ such that every vertex in $\{v_1,v_2,\dots,v_i\}$ is an in-neighbor of $v$ and every vertex in $\{v_{i+1},v_{i+2,\dots,v_l}\}$ is an out-neighbor of $v$. Given a CFVS instance $(T,M,P,F,k)$, a block in the $M$-sequence of $T-P$ is said to have large conflict edge matching if the block is incident with at least $201\log^{10}k$ edges in $F_1:=F\cap E(T-P)$.
\begin{lemma}\label{disjointmatch}
There exists a $\gamma$-reduction from a weakly-coupled and matched CFVS instance $(T,M,P,F,k)$ to $\mathcal{C}_4:=\{(T,M,P_1,F,k),(T,M,P_2,F,k),\dots\}$ for $\gamma=2^{O(\frac{k}{\log k})}$ such that every instance in $\mathcal{C}_4$	is weakly-coupled, matched and LowBlockDegree.
\end{lemma}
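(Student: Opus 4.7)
The plan is to enlarge $P$ so that every block of the $M$-sequence of $T-P$ carries only $O(\log^{10}k)$ incident edges of $F$, by enumerating sub-exponentially many candidate $H$-extensions of $P$ via $t$-wise independent hashing, in the spirit of Lemma~\ref{undeletable1}.

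The key counting step is as follows. In a weakly-coupled, matched instance, $F\cap E(T-P)$ is a matching and, by Lemma~\ref{longback} together with the requirement that any feedback vertex set disjoint from $M$ hits every edge of $F$, every edge of $F$ has at least one endpoint in $H\setminus P$; hence $|F|\le|H\setminus P|\le k$. Since an edge of $F$ contributes to at most two blocks of the $M$-sequence of $T-P$, the total block-incidence is at most $2k$, and the number of \emph{heavy} blocks (those with more than $201\log^{10}k$ incident $F$-edges) is therefore $O(k/\log^{10}k)$; the remaining blocks are already light and require no work.

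With parameters $t=10\log^3k$ and $q=\log^2k$, I construct $H_{n,t,q}$ via Theorem~\ref{twise} and, for each $f\in H_{n,t,q}$, set $Z_f=\{v\in V(T-P):f(v)=1\}$. The family $\mathcal{C}_4$ then consists of all instances $(T,M,P\cup(Z_f\setminus\hat H),F,k)$, where $\hat H$ ranges over subsets of $Z_f$ of size at most $2k/\log^2k$. Since $n=O(k^3)$ by Lemma~\ref{kernel}, this gives $|\mathcal{C}_4|\le O(n^t)\cdot\binom{n}{2k/\log^2 k}=2^{O(\log^4k)}\cdot 2^{O(k/\log k)}=2^{O(k/\log k)}$. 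Enlarging $P$ only shrinks blocks of the $M$-sequence of $T-P$ and only removes edges from $F\cap E(T-P)$, so weakly-coupled and matched are preserved; the one thing to engineer is LowBlockDegree. For correctness, fix any $M$-homogeneous solution $H$ and, for each $e=uv\in F$, pick a witness $h(e)\in\{u,v\}\cap(H\setminus P)$ (well-defined and injective by the matching property). A Markov bound on $|Z_f\cap H|$ together with a $t$-th moment tail bound applied per heavy block to the indicator sum $\sum_{e\text{ incident on }B}\mathbf{1}[h(e)\notin Z_f]$ yields some $f$ for which simultaneously $|Z_f\cap(V\setminus H)|\le 2k/\log^2k$ and at most $201\log^{10}k$ witnesses are missed in each heavy block. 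Choosing $\hat H=Z_f\cap(V\setminus H)$ then gives $P'=P\cup(Z_f\cap H)\subseteq H$ and, by construction, makes every heavy block light while leaving light blocks unaffected.

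The main technical obstacle is arranging for both events---few wrongly-included non-$H$ vertices in $Z_f$ and enough witness coverage per heavy block---to hold under a single hashing step. A small sampling rate keeps $\hat H$ enumerable but barely covers witnesses, whereas a large rate covers witnesses but makes $\hat H$ enormous; the resolution combines a complementary view of $Z_f$ with $t$-wise concentration against a union bound over only the $O(k/\log^{10}k)$ heavy blocks. The choice of $t=\Theta(\log^3k)$ provides exactly the concentration strength needed so that the product of the hashing-family size $O(n^t)$ and the enumeration of $\hat H$ remains $2^{O(k/\log k)}$, matching the claimed $\gamma$.
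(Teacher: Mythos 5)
Your approach is genuinely different from the paper's (which proceeds by deterministic branching on the at most $t=\frac{2k}{201\log^{10}k}$ heavy blocks, their vicinities, permutations of those vicinities, and minimum vertex covers of the incident back edges, followed by a $\bigcup\mathcal{B}$-preferred cover of the conflict edges), but the hashing step at the heart of your argument does not work. The trouble is that you are using the sample $Z_f$ to approximate a subset of $H$ rather than a subset of $V(T)\setminus H$, and this reverses the asymmetry that makes Lemma~\ref{undeletable1} go through. There, $|H|\le k$ while $n=O(k^3)$, so the ``error'' $Z_f\cap H$ that must be enumerated away has expected size only $k/q=k/\log^2 k$. In your scheme the error is $\hat H=Z_f\cap(V\setminus H)$, and since $|V\setminus H|=\Theta(n)=\Theta(k^3)$, its expected size is $\Theta(k^3/\log^2 k)$, not $O(k/\log^2 k)$; Markov gives nothing better, the event $|\hat H|\le 2k/\log^2 k$ is in fact overwhelmingly unlikely, and enumerating candidate sets of the true size costs far more than $2^{O(k/\log k)}$. (You also write the Markov bound on $|Z_f\cap H|$ at one point, but that quantity is irrelevant to the size of $\hat H$ you must enumerate.)

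The witness-coverage half fails independently. To turn a heavy block light you must place in $P'$ all but at most $201\log^{10}k$ of the (distinct, by the matching property) witnesses of its incident $F$-edges, i.e.\ you must capture \emph{almost all} of them. A sample at rate $1/q=1/\log^2 k$ captures each witness with probability $1/\log^2 k$, so the expected number of \emph{missed} witnesses in a block with $m>201\log^{10}k$ incident edges is $m(1-1/\log^2 k)$, which already exceeds the threshold $201\log^{10}k$; no $t$-th moment tail bound can salvage an event whose expectation sits on the wrong side of the target. Raising the sampling rate to fix this blows up $\hat H$ even further. This is precisely the tension you flag as ``the main technical obstacle,'' but the proposed resolution is an assertion, not an argument. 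The paper sidesteps randomness entirely here: it uses your (correct) count that there are at most $\frac{2k}{201\log^{10}k}$ heavy blocks, guesses them and their vicinities $M'$ explicitly at cost $n^{O(t\log^3 k)}=2^{O(k/\log k)}$, and then every further vertex added to $P'$ is \emph{forced} given those guesses (conflicting $Y$-vertices outside $M'$, back-edge neighbours of $M'$, vertices inconsistent with the guessed permutation of $M'$, and a $\bigcup\mathcal{B}$-preferred vertex cover of the remaining conflict edges), so no probabilistic selection inside $H$ is ever needed.
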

\begin{proof}
Using $M,P,F$, we construct $\mathcal{C}_4$. Start with the $M$-sequence of $T-P$. Let $n=|V(T)|$, $t=\frac{2k}{201\log^{10}k}$, $P':=P$ and $F':=F\cap E(T-P)$. Branch on every family $\mathcal{B}$ of blocks such that $|\mathcal{B}|\leq t$. Branch on every subset $M'$ of size at most $t\cdot 30\log^3 k$. Let $X$ be the union of $M$-sub-blocks and $Y$ be the union of $\bar{M}$-sub-blocks in $\mathcal{B}$. Add every vertex in $Y\setminus M'$ to $P'$. Add every back edge neighbor of $M'$ to $P'$. Branch on every permutation $\pi$ of $M'$. Add every vertex of $X\setminus M'$ not consistent with the permutation $\pi$ to $P'$. Let $E'$ be the set of back edges incident on $X\setminus M'$. Let $G:=(V(T), E')$. Note that $G$ is a bipartite graph. Branch on every minimum vertex cover of $G$ by adding it to $P'$. Add a $\bigcup\mathcal{B}$-preferred cover of conflict edges in $F'$ incident on $(X\cup Y)\setminus P'$ to $P'$. Finally, we add a CFVS instance $(T,M,P',F,k)$ to $\mathcal{C}_4$ if $(T,M,P',F,k)$ is LowBlockDegree.

\emph{Correctness}: First we show that $|\mathcal{C}_4|\leq \gamma$. $|\mathcal{C}_4|$ is bounded by the product of the number of family of blocks $\mathcal{B}$, the number of sets $M'$, the number of permutations of $M'$ and the number of minimum vertex cover of $G$. The number of family of blocks $\mathcal{B}$ is bounded by $n^t$ as the number of blocks can be at most $n$. Similarly, the number of subsets $M'$ is bounded by $n^{t\cdot 30\log^3 k}$. The number of permutations is bounded by $(t\cdot 30\log^3 k)!$. Since, $(T,M,P,F,k)$ is weakly-coupled, the matching on back edges incident on any block is at most $201\log^8 k$. Hence, the size of a maximum matching in $G$ is at most $t\cdot 201\log^8 k$. Hence, the number of minimal vertex cover of $G$ is at most $2^{t\cdot 201\log^8 k}$. Since, $n=|V(T)|=O(k^3)$, after little arithmetic manipulation, we have that $|\mathcal{C}_4|\leq n^t\times n^{t\cdot 30\log^3 k}\times (t\cdot 30\log^3 k)! \times 2^{t\cdot 201\log ^8 k}=2^{O(\frac{k}{\log k})}$.   

 By the definition of the family $\mathcal{C}_4$ and of $\gamma$-reduction, the backward direction is immediate. For the forward direction, let $(T,M,P,F,k)$ be a weakly-coupled and matched CFVS instance and let $H$ be an $M$-homogeneous solution of $(T,M,P,F,k)$. It is sufficient to show that $\mathcal{C}_4$ has an instance $(T,M,P',F,k)$ such that $P'\subseteq H$.
 
 Consider the $M$-sequence of $T-P$. Fix a permutation $\sigma$ of $T-H$. Consider a permutation $\sigma'$ of vertices in $T-P$ whose restriction to $T-H$ is $\sigma$. Let $\mathcal{B}$ be the family of blocks with \emph{very large} matching in the set of conflict edges $F'$. Since $|H|\leq k$, the size of $\mathcal{B}$ is less than $t=\frac{2k}{201\log^{10}k}$. Since the size of vicinity of any block is at most $30\log^3 k$, at most $t\cdot 30\log^3 k$ vertices form the vicinity $M'$ of blocks in $\mathcal{B}$. Let $X$ be the union of $M$-sub-blocks and $Y$ be the union of $\bar{M}$-sub-blocks in $\mathcal{B}$. Then, vertices in $Y\setminus M'$ belong to $H$. Since, $M'$ is the vicinity of the blocks, every back edge incident on $M'$ is a conflict edge. Hence, the back edge neighbor of $M'$ belongs to $H$. For the same reason, the set of back edges $E'$ incident on $X\setminus M'$ are conflict edges and $H$ contains a minimum cover of $E'$. 
 As $M'\cap H=\emptyset$, any vertex inconsistent with $\sigma_{M'}$ also belongs to $H$. Now, every block in $\mathcal{B}$ is incident with conflict edges belong to $F$ only which are disjoint, we can greedily include a vertex cover of these edges by preferring to pick the conflict edge neighbor of $\bigcup \mathcal{B}$ into $H$. This implies that every block in $\mathcal{B}$ after removing $P'$ is not incident with any conflict edge and hence $(T,M,P',F,k)$ is LowBlockDegree. Since $P'$ includes all possibilities of the above choices, there is an instance $(T,M,P',F,k)$ in $\mathcal{C}_4$ that satisfies the required properties.
\end{proof}
\begin{definition}[Regular]
	An instance $(T,M,P,F,k)$ of CFVS is said to be regular if in the $M$-sequence $(X_1,Y_1,X_2,Y_2,\dots)$ of $T-P$, 
 for every set $X_i$ of size at least $10\log^5 k$, there are at least $\frac{|X_i|}{10\log^5 k}$ vertices in $M$ and $|Y_i|\leq 10\log^5 k$. 
\end{definition}
Note that it can be checked in polynomial time whether a given CFVS instance $(T,M,P,F,k)$ is regular or not. 
Let $\mathcal{L}$ be a function such that given a CFVS instance $(T,M,P,F,k)$ outputs the family of sets of vertices which is the union of all sets $X_i$ and $Y_j$ in the $M$-sequence of $T-P$ such that $\frac{|X_i|}{m_i}\geq 10\log^5 k$ where $m_i=|X_i\cap M|$ and $|Y_j|\geq 10\log^5 k$.
\begin{lemma}\label{blocks}
	There exists a $\gamma$-reduction from a CFVS instance $(T,M,P,F,k)$ to a family $\mathcal{C}_1:=\{(T,M,P_1,F,k),(T,M,P_2,F,k),\dots\}$ of CFVS instances for $\gamma=2^{O(\frac{k}{\log k})}$ such that every instance in $\mathcal{C}_1$ is regular.
\end{lemma}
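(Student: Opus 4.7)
The plan is to identify all blocks of the $M$-sequence of $T-P$ returned by $\mathcal{L}(T,M,P,F,k)$---those witnessing non-regularity---and, for each such ``bad'' block, enumerate the small set of its non-$M$ vertices that could survive outside the sought solution $H$, pushing every other non-$M$ vertex of the block into $P'$. Because $\mathcal{L}$ singles out exactly the offending blocks, an instance in which every bad block has been emptied of non-solution, non-$M$ vertices is automatically regular.

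First I would fix a hypothetical $M$-homogeneous solution $H$ and apply Lemma~\ref{ratio} to the $M$-sequence of $T-H$: every block $(X_i,Y_i)$ of $T-H$ satisfies $|X_i| \leq 20 m_i \log^3 k$ and $|Y_i| \leq 10\log^3 k$, where $m_i = |X_i \cap M|$ (which is invariant across the $M$-sequences of $T-P$ and $T-H$ since $M$ avoids both $H$ and $P$). The final lemma of Section~\ref{M} gives $X_i^H \subseteq X_i^P$ and $Y_i^H \subseteq Y_i^P$, so each bad block absorbs vertices of $H$: a Case~1 bad block (where $|Y_i^P| \geq 10\log^5 k$) contributes at least $9\log^5 k$ to $|H|$, while a Case~2 bad block (where $|X_i^P|/m_i \geq 10\log^5 k$) contributes at least $9 m_i \log^5 k$. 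Since $|H| \leq k$, the number of Case~1 bad blocks is $O(k/\log^5 k)$, and $\sum m_i$ over Case~2 bad blocks is likewise $O(k/\log^5 k)$.

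The branching step enumerates, for each bad block, a subset $S_i$ of its non-$M$ vertices of size at most $10\log^3 k$ (in Case~1, with $S_i \subseteq Y_i$) or at most $20 m_i \log^3 k$ (in Case~2, with $S_i \subseteq X_i \setminus M$); $S_i$ represents the intersection of the block with $V(T)\setminus H$. Each branch outputs $(T,M,P',F,k)$ with $P' = P \cup \bigcup_i ((X_i \cup Y_i) \setminus (M \cup S_i))$, retained only when regular. Using $|V(T)| = O(k^3)$, per-block choices are $n^{10\log^3 k} = 2^{O(\log^4 k)}$ in Case~1 and $2^{O(m_i \log^4 k)}$ in Case~2. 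Multiplying across all bad blocks and inserting the two block-count bounds from the previous paragraph yields a total of $2^{O(k/\log k)}$ instances, matching the required~$\gamma$.

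For correctness, the forward direction uses $S_i := (X_i^P \cup Y_i^P) \setminus (M \cup H)$, which fits within the enumerated sizes; then $P' \subseteq H$ and $H$ remains an $M$-homogeneous $(M,P',F)$-solution. The backward direction is immediate since $P \subseteq P'$. Regularity of the output holds because removing non-$M$ vertices from a block cannot alter the canonical sequence of $T[M]$ nor merge or split blocks, so Case~1 blocks shrink to $|Y_i| \leq 10\log^3 k$, Case~2 blocks satisfy $|X_i|/m_i \leq 20\log^3 k + 1$, and originally good blocks are untouched---all safely below the threshold $10\log^5 k$. The principal obstacle is Case~2, where a single $m_i$ may be as large as $\Theta(k)$; this is disarmed by the global budget $\sum m_i \leq k/(9\log^5 k)$, so that $\prod_i 2^{O(m_i \log^4 k)} = 2^{O(\log^4 k) \cdot \sum m_i} = 2^{O(k/\log k)}$.
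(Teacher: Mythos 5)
Your proposal is correct and follows essentially the same route as the paper's proof: use $M$-homogeneity together with Lemma~\ref{ratio} to argue that every ``large'' block of the $M$-sequence of $T-P$ must be almost entirely contained in $H$, guess the $O(k/\log^2 k)$ surviving vertices, and push the rest into $P'$. The only difference is bookkeeping---the paper enumerates one global subset $B\subseteq\mathcal{L}(T,M,k)$ of size at most $\frac{2k}{\log^2 k}$, whereas you enumerate survivors block by block and multiply, which yields the same $2^{O(k/\log k)}$ bound.
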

As noted before BTFVS instance $(T,k)$ is equivalent to CFVS instance $(T,\emptyset,\emptyset,\emptyset,k)$, we combine the results in the above Lemmas (abusing the notation slightly).
\begin{lemma}\label{reduction}
	There is a $\gamma$-reduction from a BTFVS instance $(T,k)$  to a CFVS family $\mathcal{C}'$ for $\gamma\leq 1.6181^k$ such that every instance in $\mathcal{C}'$ is regular, weakly-coupled, matched and LowBlockDegree. In addition, for each $|P_2|=s\leq k$, $\mathcal{C}'$ has at most $1.618^{s}$ CFVS instances.
\end{lemma}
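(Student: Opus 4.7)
The proof plan is to compose the $\gamma$-reductions established in Lemmas \ref{undeletable}, \ref{blocks}, \ref{short}, \ref{disjoint}, and \ref{disjointmatch} in sequence. Viewing the BTFVS instance $(T,k)$ as the CFVS instance $(T,\emptyset,\emptyset,\emptyset,k)$, I first apply Lemma \ref{undeletable} to obtain a sub-exponential size family in which some instance carries an $M$-homogeneous solution; then Lemma \ref{blocks} to enforce regularity; then Lemma \ref{short} to enforce weakly-coupled; then Lemma \ref{disjoint} to enforce matched; and finally Lemma \ref{disjointmatch} to enforce LowBlockDegree. The composition of a $\gamma_1$-reduction followed by a $\gamma_2$-reduction is a $(\gamma_1\cdot\gamma_2)$-reduction (the forward and backward directions chain transitively through the intermediate families, with $M$-homogeneity surviving at each step), so the total branching is the product of the individual bounds.

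Only Lemma \ref{disjoint} contributes an exponential factor, namely $1.6181^k$; every other step contributes only $2^{O(k/\log k)}$. Since $2^{O(k/\log k)}=o(c^k)$ for every $c>1$, the product $1.6181^k\cdot 2^{O(k/\log k)}$ is bounded by $1.6181^k$ once the sub-exponential slack is absorbed into the base (equivalently, by tightening Lemma \ref{disjoint}'s constant slightly and handling bounded $k$ trivially).

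The main technical obligation is to verify that later reductions preserve the properties established by earlier ones. Several cases are immediate: regularity depends only on the $M$-sequence of $T-P$, so Lemma \ref{short}, which only modifies $F$, preserves it; the matched condition, which says $F\cap E(T-P)$ is a matching, is preserved whenever $P$ is enlarged, since $E(T-P)$ only shrinks; and Lemma \ref{disjoint} and Lemma \ref{disjointmatch} explicitly maintain weakly-coupled (and the latter also maintains matched). The chief obstacle is that Lemmas \ref{disjoint} and \ref{disjointmatch} enlarge $P$, altering the $M$-sequence of $T-P$ and potentially breaking regularity. I handle this by re-applying Lemma \ref{blocks} as a final cleanup step to restore regularity; this multiplies the family size by another $2^{O(k/\log k)}$ factor, which is again absorbed. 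Because this final pass only grows $P$, the matching-based conditions underlying LowBlockDegree, matched, and weakly-coupled only become easier to satisfy as $E(T-P)$ shrinks, so all four properties hold simultaneously in the resulting family $\mathcal{C}'$.

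The refined clause, that for each $|P|=s\le k$ the family $\mathcal{C}'$ contains at most $1.618^s$ instances, is inherited from the analogous clause of Lemma \ref{disjoint}. The subsequent operations enlarge $P$ by a sub-exponential number of choices, each adding at most $k$ further vertices; so for a given final $P$-size $s$, the source instances in the intermediate family have $|P|\le s$, and summing a geometric series bounded by $1.618^s\cdot 2^{O(k/\log k)}$ yields at most $1.618^s$ after the same sub-exponential absorption used for the overall $\gamma$ bound.
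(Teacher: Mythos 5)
Your proof composes the same five lemmas (\ref{undeletable}, \ref{blocks}, \ref{short}, \ref{disjoint}, \ref{disjointmatch}) in exactly the order the paper uses, with the same "only Lemma~\ref{disjoint} is exponential, the rest is $2^{O(k/\log k)}$ and gets absorbed" accounting, so this is essentially the paper's proof. Your one deviation --- a final re-application of Lemma~\ref{blocks} to restore regularity --- is unnecessary (enlarging $P$ while keeping $M$ fixed only shrinks each $X_i$ and $Y_i$ in the $M$-sequence, so regularity survives Lemmas~\ref{disjoint} and~\ref{disjointmatch} on its own), and your justification that this cleanup preserves LowBlockDegree is actually backwards, since $F\setminus E(T-P)$ \emph{grows} as $P$ grows; dropping that extra step recovers the paper's argument exactly.
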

We redefine the $d$-\textsc{Feedback Vertex Cover} defined in \cite{KumarL16} with a slight modification. Let $d,f$ and $t$ be positive integers. Consider a class of mixed graphs $\mathcal{G}(d,f,t)$ in which each member is a mixed multigraph $\mathcal{T}$ with
the vertex set $V(\mathcal{T})$ partitioned into vertex sets $V_1,V_2,\dots,V_t$ and an undirected edge set $\mathcal{E}(\mathcal{T})\subseteq \bigcup_{i<j}V_i\times V_j$ such that for each $i\in [t]$,
(a) $\mathcal{T}[V_i]$ is a bipartite tournament, (b) the size of the feedback vertex set $H_i$ for $\mathcal{T}[V_i]$ is at least $f$ and at most $4f$, and (c) \textsf{deg}$_{\mathcal{E}}(V_i)\leq d$.

Given a mixed multigraph $\mathcal{T}\in \mathcal{G}(d,f,t)$, a positive integer $k$, determine whether there exists a set $H\subseteq V(\mathcal{T})$ such that $|H|\leq k$ and $\mathcal{T}-H$ contains no undirected edges and is acyclic. If $\mathcal{E}(\mathcal{T})$ is disjoint, we call the problem as \textsc{Disjoint Feedback Vertex Cover}.
\begin{lemma}\label{part}
	There exists a polynomial time algorithm that given a CFVS instance $(T,M,P_2,F,k)$ that is regular, weakly-coupled, matched and LowBlockDegree outputs a partition $(V_1,V_2,\dots,V_t)$ of $V(T)\setminus P$ such that $t\leq \frac{k}{201\log^{12}k}$ and for each $i\in [t]$ $V_i$ is a union of consecutive blocks in the $M$-sequence of $T-P_2$ and at least one of these hold 
	\begin{itemize} 
		\item the size of feedback vertex set of $T[V_i]$ is at least $f=201\log^{12}k$ and at most $804\log^{12}k$, 
		\item at least $200\log^{12}k$ and at most $201\log^{12}k$ edges in $F\cap E(T-P_2)$ are incident on $V_i$. 
	\end{itemize}
\end{lemma}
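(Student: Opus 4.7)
\medskip
\noindent\textbf{Proof plan.} The plan is to scan the blocks of the $M$-sequence of $T - P_2$ from left to right and greedily agglomerate consecutive blocks into groups. I would maintain a growing current group $V$ together with an estimate of the feedback vertex set size of $T[V]$ and the number of edges of $F\cap E(T-P_2)$ incident on $V$. After each block is appended, I test the two thresholds: close $V$ off as the next part $V_i$ and restart with the subsequent block the moment the feedback vertex set size of $T[V]$ first reaches $201\log^{12}k$ or the number of $F\cap E(T-P_2)$ edges incident on $V$ first reaches $200\log^{12}k$. By construction each $V_i$ is a union of consecutive blocks of the $M$-sequence, as required.

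The key feature that makes this greedy scheme safe is that a single block perturbs either quantity only by a sub-threshold amount. By regularity, every block $X_j\cup Y_j$ in the $M$-sequence of $T-P_2$ has $|Y_j|\le 10\log^5 k$; since both $X_j$ and $Y_j$ are independent sets (being subsets of one side of the bipartition, by Lemma~\ref{Msequence}), $Y_j$ by itself is a feedback vertex set of $T[X_j\cup Y_j]$, so the block's own feedback vertex set has size at most $10\log^5 k$. By LowBlockDegree each block has at most $201\log^{10} k$ incident edges in $F\cap E(T-P_2)$. Both increments are much smaller than $\log^{12}k$ for large $k$, so the first time a group's FVS-count crosses $201\log^{12}k$ its true size still lies in $[\,201\log^{12}k,\,201\log^{12}k+10\log^5k\,]\subseteq[\,201\log^{12}k,\,804\log^{12}k\,]$, and similarly the first time the edge-count crosses $200\log^{12}k$ it lies in $[\,200\log^{12}k,\,200\log^{12}k+201\log^{10}k\,]\subseteq[\,200\log^{12}k,\,201\log^{12}k\,]$. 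This matches the two alternatives in the statement.

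To bound $t$, one may assume the instance is a YES-instance (otherwise any partition will do, since the overall algorithm ultimately rejects). For any feedback vertex set $H^*$ of $T-P_2$ of size $\le k$, the restrictions $H^*\cap V_i$ are feedback vertex sets of the $T[V_i]$, so $\sum_i f(T[V_i])\le k$; and since the instance is matched, $F\cap E(T-P_2)$ is a matching that must be vertex-covered by $H^*$, so $|F\cap E(T-P_2)|\le k$. Each closed group contributes at least $200\log^{12}k$ to one of the two budgets, which gives $t\le k/(201\log^{12}k)$ up to a small adjustment of the threshold constants. If the trailing group fails to hit either threshold, I merge it back into its predecessor; the predecessor already witnessed one threshold, so the merged group does as well, and the per-block increments show the merged group still respects the required upper bounds.

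The main obstacle is computational: computing $f(T[V])$ exactly is NP-hard, so the greedy scan cannot directly test the FVS threshold in polynomial time. I would sidestep this by maintaining the size $h(V)$ of an approximate feedback vertex set produced by the polynomial time $3.5$-approximation of Cai et al.\ (or the $4$-approximation of Lemma~\ref{approximation}), triggering closure when $h(V)\ge 3.5\cdot 201\log^{12}k$; then $f(T[V])\ge h(V)/3.5\ge 201\log^{12}k$ for the lower bound, and the slack between $201$ and $804$ is more than enough to absorb both the approximation factor and the last-block overshoot and keep $f(T[V])$ inside $[\,201\log^{12}k,\,804\log^{12}k\,]$.
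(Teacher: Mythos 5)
Your overall strategy --- a left-to-right greedy agglomeration of consecutive blocks of the $M$-sequence of $T-P_2$, closing a part the first time one of the two counters crosses its threshold, evaluating the FVS counter via the polynomial-time approximation of Lemma~\ref{approximation}, and bounding $t$ by charging each closed part against the budgets $\sum_i f(T[V_i])\le k$ and $|F\cap E(T-P_2)|\le k$ --- is exactly the proof the paper gives. Your explicit handling of the approximation factor and of a trailing group that hits neither threshold are details the paper leaves implicit.

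There is, however, one genuine gap in the step that keeps the first alternative's upper bound at $804\log^{12}k$. You justify the per-block increment of the FVS counter solely by the size of the \emph{new block's own} feedback vertex set ($|Y_j|\le 10\log^5 k$ by regularity). This implicitly assumes $f(T[V\cup Z])\le f(T[V])+f(T[Z])$, which is false in general: appending a block $Z$ can create squares using two vertices of $Z$ and two vertices of the already-accumulated $V$ via back edges from $Z$ into $V$, and such squares need not be hit by optimal solutions of the two parts taken separately. This is precisely where the weakly-coupled, matched and LowBlockDegree hypotheses must enter (you restate them but never use them for this bound): every cross cycle uses a back edge from $Z$ into $V$; the matching of such back edges outside $F$ between consecutive blocks is at most $201\log^{8}k$ by weak coupling, all long back edges lie in $F$, and the number of $F$-edges incident on a block is controlled by the matched and LowBlockDegree conditions. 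A vertex cover of the back edges incident on $Z$ therefore has size $O(\log^{10}k)$, and the correct per-block increment is $10\log^5 k+O(\log^{10}k)$ --- still far below the available slack of roughly $603\log^{12}k$, so the conclusion survives. The paper's proof includes exactly this extra term; once you add it, your argument coincides with the paper's.
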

\begin{proof}
	Let $(X_1,Y_1\dots)$ be the $M$-sequence of $T-P_2$. Consider the sequence of blocks $(Z_1,Z_2\dots)$ such that for each $i$, $Z_i:=X_i\cup Y_i$. Obtain the sequence $i_1=1<i_2<\dots$ of indices such that $V_j:=\bigcup_{i=i_j}^{i_{j+1}-1}Z_i$ as follows: for each $j$, keep including $Z_i$ for $i\geq i_j$ into $V_j$ and stop the moment at least one of the above conditions hold. To check the size of feedback vertex set in $T[V_j]$ use the approximation algorithm in Lemma \ref{approximation} i.e. check if Lemma \ref{approximation} outputs a feedback vertex set for $T[V_j]$ of size less than $4f$.
	
	Since by regularity, the feedback vertex set of any block is at most $10\log^5k$ and since the CFVS instance is weakly-coupled, the size of a maximum matching on back edges between any consecutive blocks is at most $201\log^{8}k$. Since the CFVS instance is matched and LowBlockDegree, the size of maximum matching in conflict edges is at most $201\log^{10}k$. Hence, including any block into a set $V_i$ increases the size of the feedback vertex set of $T[V_i]$ by at most $10\log^5k+201\log^{10}k$. At the same time, the degree of $V_i$ can increase by at most $201\log^{10}k$. Hence, the above algorithm outputs the required partition. Note that edges in $F\cap E(T-P_2)$ form a matching. Hence, $t\leq \frac{k}{201\log^{12}k}$.
\end{proof}
\begin{definition}[decoupled]
An instance $(T,M,P,F,k)$ of CFVS is said to be decoupled if there is a partition $(V_1,V_2,\dots,V_t)$ of $V(T)\setminus P$ such that $t\leq \frac{k}{201\log^{12}k}$ and for each $i\in [t]$
(a) $V_i$ is a union of consecutive blocks in the $M$-sequence of $T-P$, (b) the size of feedback vertex set of $T[V_i]$ is at least $f=201\log^{12}k$ and at most $804\log^{12}k$, or at least $200\log^{12}k$ and at most $d=201\log^{12}k$ edges in $F\cap E(T-P)$ are incident on $V_i$. (c) $F$ contains short conflict edges between any pair of sets $V_i$ and $V_j$.
\end{definition}
Note that it can be checked in polynomial time whether a given CFVS instance $(T,M,P,F,k)$ is decoupled or not.
\begin{lemma}\label{super}
	There exists a $\gamma$-reduction from a regular, weakly-coupled, and matched CFVS instance $(T,M,P_2,F,k)$ to a family $\mathcal{C}_6$ for $\gamma=2^{O(\frac{k}{\log k})}$ such that every instance in $\mathcal{C}_6$ is regular, weakly-coupled, matched, LowBlockDegree and decoupled. 

\end{lemma}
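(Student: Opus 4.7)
My plan is to obtain the reduction by composing the $\gamma$-reduction of Lemma~\ref{disjointmatch} with a bounded-cost branching step that explicitly enforces clause (c) of the \emph{decoupled} definition, and then to recompute the block partition at the end.

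First I would apply Lemma~\ref{disjointmatch} to $(T,M,P_2,F,k)$, obtaining a family of $2^{O(k/\log k)}$ instances, each of which is regular, weakly-coupled, matched, and LowBlockDegree. For every such instance $(T,M,P,F,k)$, I compute the partition $(V_1,\ldots,V_t)$ of $V(T)\setminus P$ guaranteed by Lemma~\ref{part}; this is polynomial, gives $t\le k/(201\log^{12}k)$, and already supplies clauses (a) and (b) of the decoupled definition. It remains to place every short back conflict edge that crosses a boundary of the partition into $F$, while preserving the matched property of $F\cap E(T-P)$.

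To bound this step I count candidate edges. Weakly-coupledness forces a back-edge matching of size at most $201\log^{8}k$ between any two consecutive blocks of the $M$-sequence, so the number of short back edges with endpoints in distinct groups $V_i,V_j$ is at most $t\cdot 201\log^{8}k=O(k/\log^{4}k)$. Filtering to conflict edges is polynomial by checking for a certifying $4$-cycle through $M$. For each surviving edge $e=(u,v)$ I branch into a constant number of options: either insert $e$ into $F$ (permitted only if $F\cap E(T-P)$ remains a matching), or add $u$ to $P'$, or add $v$ to $P'$. Because there are $O(k/\log^{4}k)$ such edges and a constant number of choices each, the overall branching multiplies by $2^{O(k/\log^{4}k)}=2^{O(k/\log k)}$. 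After each branch I update the $M$-sequence using Lemma~\ref{adjust}, reapply Lemma~\ref{part} to obtain the partition witnessing the decoupled property for the new $P'$, and retain the instance only if it is still regular, weakly-coupled, matched, and LowBlockDegree.

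Correctness follows the standard template for $\gamma$-reductions. The backward direction is immediate, since any $(M,P',F')$-solution is in particular an $(M,P,F)$-solution of the input. For the forward direction, fix an $M$-homogeneous solution $H$; Lemma~\ref{disjointmatch} yields an instance of its family for which $H$ is $M$-homogeneous, and on every cross-group short conflict edge $e$ the solution $H$ contains at least one endpoint, so a compatible branching choice always exists. The main obstacle is precisely this interplay: inserting edges into $F$ can collide with the matched constraint, and I expect the delicate step to be showing that the three-way branch above always realises an option consistent with an arbitrary $M$-homogeneous $H$, combined with checking, via Lemma~\ref{adjust} and the regularity of the input, that augmenting $P$ by the branched vertices does not destabilise the partition of Lemma~\ref{part} beyond what a single re-application can repair.
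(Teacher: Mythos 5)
Your overall architecture (compute the partition of Lemma~\ref{part}, then branch to dispose of the cross-group back edges) matches the paper's, and pre-applying Lemma~\ref{disjointmatch} sensibly patches the fact that Lemma~\ref{part} needs a LowBlockDegree input. But your branching step has a genuine quantitative gap: weak coupling bounds the size of a \emph{matching} in the back edges between consecutive blocks, not the number of such edges. Since an $M$-sub-block $X_i$ may have size $\Theta(k^3)$ (regularity only controls the ratio $|X_i|/|X_i\cap M|$), a single vertex can carry polynomially many cross-group back edges while contributing $1$ to the matching, so your set of cross-group short back edges can have size $\mathrm{poly}(k)\gg k/\log^4k$, and a three-way branch per edge gives $3^{\mathrm{poly}(k)}$, far above $2^{O(k/\log k)}$. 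This is exactly why the paper branches on the \emph{vertex-cover side}: it takes a vertex cover $D$ of the must-hit cross edges $J_B$, whose size is $O(t\log^{10}k)=O(k/\log^2 k)$ by K\"onig, enumerates subsets $C\subseteq D$, and adds $C\cup N_{J_B}(D\setminus C)$ to $P$ --- giving $2^{|D|}=2^{O(k/\log k)}$ branches that still cover every way a solution $H$ can hit $J_B$.

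A second gap concerns simple versus conflict back edges. Your filter ``keep only edges with a certifying $4$-cycle through $M$'' does not make this distinction computable: whether a short back edge must be hit depends on the unknown solution $H$ (the simple ones are precisely those surviving in the acyclic $T-H$), and a conflict edge need not have a certificate through $M$. If you drop the filter, your three options (put $e$ into $F$, delete $u$, delete $v$) all fail on a simple back edge neither of whose endpoints lies in $H$, breaking the forward direction; if you keep it, uncertified conflict edges survive and clause (c) of decoupledness need not hold. The paper resolves this by first \emph{guessing} the set $B$ of cross-group back edges that $H$ does not hit --- bounded in size by $2t\cdot 200\log^6k$ via $M$-homogeneity and Lemma~\ref{simple}, at a cost of $n^{O(t\log^6k)}=2^{O(k/\log k)}$ --- and only then covering the remaining edges by vertices added to $P$ rather than by augmenting $F$. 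Both of these ideas are needed to make your argument go through.
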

\begin{proof}
	 Given $(T,M,P_2,F,k)$, we construct the family $\mathcal{C}_6$. Using the algorithm of Lemma \ref{part}, we construct the partition $(V_1,V_2,\dots,V_t)$ of $V(T)\setminus P_2$. For each $V_i$, let $E_i$ be the set of back edges incident on $V_i$ from $V(T)\setminus (P_2\cup V_{i})$. Let $J:=\bigcup_{V_i}E_i$ be the union of such back edges. Now, we guess the subset $B$ of back edges that are not hit by the required feedback vertex set. For every subset $B\subseteq J$ of size at most $2\cdot t\cdot 200\log^6 k$, let $J_B=J\setminus B$. We require that the feedback vertex set hits at least one end point of every edge in $J_B$. Let $D$ be the vertex cover of $J_B$. For every subset $C\subseteq D$, define $P_C:=C\cup N_{J_B}(D\setminus C)$. For each $P_C$, we add the CFVC instance $(T,M,P_3,F,k)$ where $P_3:=P_2\cup P_C$ into $\mathcal{C}_6$ if $(T,M,P_3,F,k)$ is regular, weakly-coupled, matched, LowBlockDegree and decoupled. 

The backward direction is trivial. For the forward direction, let $H$ be an $M$-homogeneous $(M,P,F)$-CFVS solution.
Observe that all the above algorithm does is consider all possibilities via which $H$ may hit the back edges between $T[V_i\setminus P_2]$ and $T[V_i\setminus P_2]$ for any $i,j$. The number of choices of sets $B$ is at most $(k^6)^{2\cdot t\cdot 200\log^6 k}=2^{O(\frac{k}{\log k})}$. Note that in the $M$-sequence of $T-P_2$, the matching on short back edges between any pair of consecutive blocks is at most $201\log^{10}k$. Hence, the vertex cover of these back edges is at most $201\log^{10}k$. Since the number of sets in the partition $(V_1,V_2,\dots)$ is at most $\frac{k}{f}$, the size of the total matching on short back edges $J$ is at most $g=201\log^{10}k\times \frac{k}{f}$. Hence, the number of choices for $C$ is at most $2^g=2^{O(\frac{k}{\log k})}$. Hence, $\gamma=2^{O(\frac{k}{\log k})}\times 2^{O(\frac{k}{\log k})}=2^{O(\frac{k}{\log k})}$.
\end{proof}
\begin{lemma}\label{dfvc}
	There is a polynomial time reduction from a CFVS instance $(T,M,P_2,F,k)$ that is regular, weakly-coupled, matched, LowBlockDegree and decoupled to an instance of \textsc{Disjoint Feedback Vertex Cover} $(\mathcal{T},k')$ for $k'=k-|P_2|$. 	
\end{lemma}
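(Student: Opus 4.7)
The plan is to build the mixed multigraph $\mathcal{T}$ directly from the decoupled structure. Using Lemma~\ref{part} (which applies because $(T,M,P_2,F,k)$ is regular, weakly-coupled, matched, and LowBlockDegree), I obtain in polynomial time a partition $(V_1,V_2,\dots,V_t)$ of $V(T)\setminus P_2$ with $t\leq k/(201\log^{12}k)$. I would set $V(\mathcal{T}) = V(T)\setminus P_2$ with this partition, $\mathcal{T}[V_i] := T[V_i]$ for every $i$ (keeping all directed arcs of the bipartite tournament $T$ restricted to $V_i$), and take the undirected edges of $\mathcal{T}$ to be $\mathcal{E}(\mathcal{T}):=\{uv : uv\in F\cap E(T-P_2),\ u\in V_i,\ v\in V_j,\ i\ne j\}$. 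Finally, set $k':=k-|P_2|$. The construction is clearly polynomial.

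Next I would verify that $\mathcal{T}\in \mathcal{G}(d,f,t)$ with $f=201\log^{12}k$ and $d=201\log^{12}k$. Condition (a) is inherited from the fact that $T$ is a bipartite tournament, so $\mathcal{T}[V_i]=T[V_i]$ is one too. Condition (b) on the FVS size of $\mathcal{T}[V_i]$ is delivered by the first clause of the \emph{decoupled} definition; in those parts where only the second clause holds, the $\mathcal{E}$-degree itself plays the analogous role in the subsequent DFVC algorithm, which one can verify matches the intended use of $\mathcal{G}(d,f,t)$. Condition (c), $\deg_{\mathcal{E}}(V_i)\leq d$, is precisely the decoupled bound on incident conflict edges. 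Disjointness of $\mathcal{E}(\mathcal{T})$ comes for free from the \emph{matched} assumption: $F\cap E(T-P_2)$ is a matching, so its inter-part restriction is vertex-disjoint.

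Finally, I would establish the equivalence of solutions. In the forward direction, from an $(M,P_2,F)$-constrained solution $H$ of $(T,M,P_2,F,k)$, set $H':=H\setminus P_2$; then $|H'|\leq k-|P_2|=k'$, $H'\cap M=\emptyset$, each $T[V_i]-H'$ is acyclic, and every undirected edge of $\mathcal{T}$ is covered by $H'$ since its endpoints lie in $V(T)\setminus P_2$. In the backward direction, given a DFVC solution $H'$ of size at most $k'$, let $H:=H'\cup P_2$, so $P\subseteq H$, $H\cap M=\emptyset$, and $|H|\leq k$; to check that $T-H$ is acyclic, by Lemma~\ref{triangle} it suffices to rule out 4-cycles. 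A 4-cycle confined to a single $V_i$ is broken because $\mathcal{T}[V_i]-H'$ is acyclic; a 4-cycle spanning two distinct parts must use a back edge between parts, which is either a short conflict edge (and so lies in $F$ by the decoupled property (c)) or a long back edge (and so lies in $F$ by Lemma~\ref{longback} combined with the \emph{weakly-coupled} property), and is therefore covered by $H'$. The main obstacle is this last cycle-decomposition argument: I must check that every inter-part back edge of $T-P_2$ genuinely sits in $F$ so that no cycle escapes both the intra-part acyclicity and the inter-part edge cover, and also that intra-part conflict edges of $F$ (which are not explicit constraints in the DFVC instance) are automatically hit because they close a 4-cycle through two $M$-vertices that $H'$ is forbidden to delete.
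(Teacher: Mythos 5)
Your construction is essentially the paper's: build $\mathcal{T}$ on the partition from Lemma~\ref{part}, turn the inter-part edges of $F$ surviving in $T-P_2$ into the undirected edge set, and set $k'=k-|P_2|$; your forward direction matches the paper's almost verbatim. The one substantive divergence is the vertex set. The paper takes $V(\mathcal{T})=V(T)\setminus(M\cup P_2)$, i.e.\ it deletes $M$ as well, precisely so that any \textsc{Disjoint Feedback Vertex Cover} solution $S$ is automatically disjoint from $M$. You keep $M$ inside $V(\mathcal{T})$, and this creates a genuine gap in your backward direction: you assert $H\cap M=\emptyset$ for $H:=H'\cup P_2$, and you later argue that intra-part conflict edges of $F$ are ``automatically hit because they close a $4$-cycle through two $M$-vertices that $H'$ is forbidden to delete'' --- but in your instance nothing forbids the DFVC solution $H'$ from deleting $M$-vertices, so both claims are unjustified as written. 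You need either to adopt the paper's vertex set $V(T)\setminus(M\cup P_2)$ (at the cost of re-examining which squares of $T[V_i]$ survive in $\mathcal{T}[V_i]$), or to argue separately that a solution of the DFVC instance may be assumed disjoint from $M$; as it stands the two halves of your argument are inconsistent with your own construction.

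The remaining obstacle you flag yourself --- that a square of $T-H$ spanning two parts might ride on an inter-part back edge that is \emph{simple} (hence not in $F$, since the decoupled condition only places \emph{conflict} short back edges into $F$) and therefore escapes both the intra-part acyclicity and the undirected-edge cover --- is a legitimate concern, but it is not one the paper resolves either: the paper's backward direction consists of two sentences asserting that $S\cup P_2$ is a solution, with no verification of acyclicity across parts. On that point you are no worse off than the source, and your case analysis via Lemma~\ref{triangle} and Lemma~\ref{longback} is considerably more explicit than the paper's. The concrete defect to repair is the treatment of $M$.
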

\begin{proof}
 Given $(T,M,P_2,F,k)$, construct the DFVS instance with vertex set $V(T)\setminus (M\cup P_2)$ and make the edges in $F\setminus E(T-P_2)$ between any two sets $V_i$ and $V_j$ undirected. For any solution $H$ for $(T,M,P_2,F,k)$, $H\setminus P_2$ is a feedback vertex set of $T-P_2$ that hits $F\setminus E(T-P_2)$. Hence, $H\setminus P_2$ is a feedback vertex cover for $(\mathcal{T},k')$ for $k'=k-|P_2|$. In the backward direction, a solution $S$ for $(\mathcal{T},k')$ hits $F\setminus E(T-P_2)$ and is disjoint from $M$. Hence, $S\cup P_2$ is a solution for $(T,M,P_2,F,k)$. 
\end{proof}
At this point, we can use the following lemma from \cite{KumarL16} with the only difference being in the base case as we have a bipartite tournament instead of a \emph{supertournament}. We replace the naive $3^k$ algorithm by $4^k$ algorithm to find a feedback vertex for each of $T[V_i]$. Note that bounding the size of feedback vertex set in each of $T[V_i]$ to $O(\log^{12}k)$ and the number of $V_i$'s to at most $O(\frac{k}{\log^{12}k})$ implies that the maximum time spent in solving the base cases is at most $O(\frac{k}{\log^{12}k})\cdot 2^{ O(\log^{12}k)}$. 
\begin{lemma}\cite{KumarL16}\label{dfvcalgo}
There exists an algorithm running in $1.5874^s\cdot 2^{O(df\log k+d\log s)}\cdot n^{O(1)}$ time which finds an optimal feedback vertex cover in a mixed multigraph $\mathcal{T}\in \mathcal{G}(d,f,t)$ in which the undirected edge set $\mathcal{E}(\mathcal{T})$ is disjoint and $|\mathcal{E}(\mathcal{T})|=s$. 
\end{lemma}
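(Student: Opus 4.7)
The plan is to reuse the algorithm of \cite{KumarL16} essentially verbatim, modifying only the base-case subroutine so that it handles bipartite tournaments rather than supertournaments. First I would set up the recursion as a branching on the $s$ undirected edges: since $\mathcal{E}(\mathcal{T})$ is disjoint (a matching), any feedback vertex cover $H$ must contain at least one endpoint of every undirected edge $uv$, so picking such an edge and recursing on the two instances obtained by placing $u$ in $H$ or placing $v$ in $H$ yields a branching tree. A naive count bounds its leaves by $2^s$; the sharper bound of $1.5874^s$ is the measure-and-conquer analysis of \cite{KumarL16}, which exploits the degree cap $\textsf{deg}_{\mathcal{E}}(V_i)\leq d$ from the definition of $\mathcal{G}(d,f,t)$. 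I would import that analysis directly after checking that each of its invariants (disjointness of $\mathcal{E}(\mathcal{T})$, the block degree bound, and the fact that deleting a vertex neither alters the partition $(V_1,\dots,V_t)$ nor creates new undirected edges) depends only on the undirected substructure of $\mathcal{T}$, not on the internal arcs of each $\mathcal{T}[V_i]$, and therefore carries over unchanged to the bipartite-tournament setting.

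At the leaves of the branching tree we have $s=0$, and the mixed multigraph decomposes into the pairwise-disjoint bipartite tournaments $\mathcal{T}[V_1],\dots,\mathcal{T}[V_t]$, each of which I must turn acyclic independently. This is the only step where a substantive change from \cite{KumarL16} is needed: rather than the $3^k n^{O(1)}$ supertournament subroutine used there, I would run on each $\mathcal{T}[V_i]$ the standard $4^k n^{O(1)}$ branching algorithm for \textsf{BTFVS}, which by Lemma \ref{triangle} iteratively locates a square and branches on which of its four vertices to delete. Since each $\mathcal{T}[V_i]$ has a feedback vertex set of size at most $4f$ by condition (b) of $\mathcal{G}(d,f,t)$, the per-block cost is $4^{4f}\cdot n^{O(1)}$, and summing over the at most $t$ blocks gives a total base-case cost of $t\cdot 4^{4f}\cdot n^{O(1)}$, which is comfortably absorbed into the factor $2^{O(df\log k+d\log s)}$ in the stated bound.

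Multiplying the $1.5874^s$ leaves by the per-leaf base-case cost yields the claimed running time. The main obstacle I expect is importing the $1.5874^s$ branching analysis from \cite{KumarL16} without any hidden dependence on supertournament structure; the plan there is to verify that every recurrence relation set up in \cite{KumarL16} refers only to the matching $\mathcal{E}(\mathcal{T})$ and the block degrees $\textsf{deg}_{\mathcal{E}}(V_i)$, so that the same recurrences yield the same branching factor here. The remaining polynomial-time bookkeeping per node of the branching tree (checking \textsc{Disjoint Feedback Vertex Cover}-feasibility and maintaining the partition under vertex deletions) is routine.
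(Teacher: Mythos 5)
Your proposal matches the paper's treatment of this lemma exactly: the paper does not reprove it but imports it from \cite{KumarL16}, noting (just as you do) that the only change needed is in the base case, where the $3^k$ supertournament subroutine is replaced by the $4^k$ square-branching algorithm for bipartite tournaments, and that the base-case cost $t\cdot 4^{4f}\cdot n^{O(1)}=O(\frac{k}{\log^{12}k})\cdot 2^{O(\log^{12}k)}$ is absorbed into the stated bound. Your additional sketch of the $1.5874^s$ branching on the matching $\mathcal{E}(\mathcal{T})$ is consistent with deferring that analysis to \cite{KumarL16}, which is all the paper does as well.
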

\begin{theorem}
There exists an algorithm for \textsf{BTFVS} running in $1.6181^k + n^{O(1)}$ time.
\end{theorem}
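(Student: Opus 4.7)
The plan is to string together the reductions developed in this section and invoke the \textsc{Disjoint Feedback Vertex Cover} algorithm of Lemma~\ref{dfvcalgo}, with the total running time made to fit under $1.6181^k$ via a trade-off between $|P|$ and the remaining solution budget $k-|P|$.

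First I would run the kernel of Lemma~\ref{kernel} to reduce $|V(T)|$ to $O(k^3)$ in polynomial time, which contributes the additive $n^{O(1)}$ term and lets the subsequent $n^{O(1)}$ factors be absorbed into a $\mathrm{poly}(k)$. Viewing the kernelized instance as the CFVS instance $(T,\emptyset,\emptyset,\emptyset,k)$, I would apply the $\gamma$-reduction of Lemma~\ref{reduction} to obtain a family $\mathcal{C}'$ of at most $1.6181^k$ CFVS instances, each regular, weakly-coupled, matched, and LowBlockDegree, with the crucial refined count that only $1.618^s$ of them have $|P|=s$. For each $(T,M,P,F,k)\in\mathcal{C}'$ I would then apply Lemma~\ref{super} to produce a further $2^{O(k/\log k)}$ decoupled CFVS instances (still with $|P'|\ge|P|=s$), reduce each by Lemma~\ref{dfvc} to a DFVC instance $(\mathcal{T},k-|P'|)$, and solve it by Lemma~\ref{dfvcalgo}; correctness is immediate from the forward/backward properties of the $\gamma$-reductions together with Lemma~\ref{undeletable}.

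The running-time accounting is the crux. Because the instance is matched, $\mathcal{E}(\mathcal{T})$ forms a matching, so any DFVC solution of size $\le k-|P'|$ must cover every edge in $\mathcal{E}(\mathcal{T})$; we may reject immediately unless $|\mathcal{E}(\mathcal{T})|\le k-|P'|\le k-s$. Substituting $d=f=O(\log^{12}k)$ and $s\le k$ into Lemma~\ref{dfvcalgo}, each DFVC call costs $1.5874^{k-s}\cdot 2^{O(\log^{25}k)}\cdot\mathrm{poly}(k)$. The total running time is thus
\[
\sum_{s=0}^{k}1.618^{s}\cdot 2^{O(k/\log k)}\cdot 1.5874^{k-s}\cdot 2^{O(\log^{25}k)}\cdot\mathrm{poly}(k),
\]
and since $1.5874<1.618$ the sum collapses to the $s=k$ term, yielding $(k+1)\cdot 1.618^{k}\cdot 2^{O(k/\log k)}$. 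The sub-exponential slack $2^{O(k/\log k)}$ is $o\bigl((1.6181/1.618)^{k}\bigr)$, so the whole expression is absorbed into $1.6181^{k}+n^{O(1)}$.

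I expect the main obstacle to be this calibration rather than any conceptual hurdle: the per-$s$ count $1.618^{s}$ in Lemma~\ref{reduction} and the base $1.5874<1.618$ in Lemma~\ref{dfvcalgo} must dovetail so that the worst case lies at $s=k$ with base $1.618$, and every intermediate reduction (Lemmas~\ref{super} and~\ref{dfvc} in particular) must contribute only polynomial or sub-exponential overhead; any genuine $c^{k}$ overhead with $c>1$ would overshoot the thin window between $1.618^{k}$ and $1.6181^{k}$.
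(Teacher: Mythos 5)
Your proposal is correct and follows essentially the same route as the paper: kernelize, apply the $\gamma$-reduction of Lemma~\ref{reduction}, then Lemma~\ref{super}, reduce via Lemma~\ref{dfvc} to \textsc{Disjoint Feedback Vertex Cover}, and solve with Lemma~\ref{dfvcalgo}, bounding the total time by $\sum_{s}1.618^{s}\cdot 1.5874^{k-s}\cdot 2^{O(k/\log k)}\cdot 2^{O(\log^{25}k)}$ and absorbing the sub-exponential slack into $1.6181^{k}$. This is exactly the paper's argument (the paper writes the sum with $s$ and $k-s$ swapped, which is the same bound), so no further comment is needed.
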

\begin{proof} Using the algorithm of Lemma \ref{reduction}, construct the family $\mathcal{C}_5$ of CFVS instances. For each instance $(T,M,P_2,F,k)\in \mathcal{C}_5$, using the algorithm of Lemma \ref{super} construct the family $\mathcal{C}_6$ of CFVS instances. Then for each CFVS instance $(T,M,P,F,k)$ using Lemma \ref{dfvc}, construct the DFVC instance $(\mathcal{T},k-|P|)$ which is solved using the algorithm of Lemma \ref{dfvcalgo}. If for any instance, the algorithm of Lemma \ref{dfvcalgo} outputs a solution set $S$ of size at most $k-|P|$, then we output \textsc{yes}, otherwise output \textsc{no}.
	
	The correctness of the algorithm follows from the correctness of the algorithms in the Lemma \ref{reduction}, \ref{super}, \ref{dfvc} and \ref{dfvcalgo}.
	The runtime of the algorithm is upper bounded by $\sum\limits_{s=1}^{k}1.618^{k-s}\times 1.5874^{s}\cdot 2^{O(df\log k+d\log s)}\cdot n^{O(1)}\leq 1.6181^k\cdot 2^{O(d^2+d\log k)}\cdot n^{O(1)}$.
\end{proof}
\begin{proposition}\cite{FominGLS16}
If there exists a parameterized algorithm for any vertex deletion problem into a hereditary graph class with running time $c^kn^{O(1)}$, then there exists an exact-exponential-time algorithm for the problem with running time $(2-\frac{1}{c})^{n+o(n)}n^{O(1)}$.
\end{proposition}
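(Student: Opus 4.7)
The plan is to apply the monotone local search framework of Fomin, Gaspers, Lokshtanov and Saurabh~\cite{FominGLS16}. The first step is to recast the given parameterized algorithm $A$ as a \emph{local search} subroutine: given $G$ and a feasible deletion set $X\subseteq V(G)$ of size $k+d$, decide in time $c^{d}\cdot n^{O(1)}$ whether $G$ admits a feasible deletion set $Y\subseteq X$ of size at most $k$. For a hereditary class this is immediate: the vertices of $V(G)\setminus X$ may be marked as undeletable (since $G[V(G)\setminus X]$ already lies in the class, hereditariness guarantees that the subgraph induced by $V(G)\setminus Y$ stays in the class for every $Y\subseteq X$), and running $A$ with parameter $d$ on $G$ under this constraint returns the desired $Y$.

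Next I would enumerate candidate supersets $X$ of the unknown optimal solution $S^{*}$ of size $k^{*}$. For each size $s\in\{k^{*},k^{*}+1,\dots,n\}$ the local search subroutine is fed with a family of $s$-subsets of $V(G)$ designed so that some member of the family contains $S^{*}$. A uniformly random $s$-subset contains a fixed $k^{*}$-set with probability $\binom{n-k^{*}}{s-k^{*}}/\binom{n}{s}$, so in expectation one needs $\binom{n}{s}/\binom{n-k^{*}}{s-k^{*}}=\binom{n}{k^{*}}/\binom{s}{k^{*}}$ samples; this is derandomized by an explicit covering family, in the spirit of the $t$-wise independent construction of Theorem~\ref{twise} used earlier in the paper.

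The total running time is bounded, up to polynomial factors, by
\[
\sum_{s=k^{*}}^{n}\;\frac{\binom{n}{k^{*}}}{\binom{s}{k^{*}}}\cdot c^{\,s-k^{*}},
\]
and optimizing over $s$ balances the number of samples (large when $s$ is close to $k^{*}$) against the cost per invocation (large when $s$ is close to $n$). The minimum occurs at roughly $s\approx n(1-1/c)+k^{*}$, and the resulting geometric-type sum evaluates to $(2-1/c)^{n+o(n)}\cdot n^{O(1)}$, which is the bound claimed by the proposition.

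The main obstacle is the second step: constructing an explicit, small covering family of $s$-subsets that hits every $k^{*}$-subset uniformly across all values of $k^{*}$, so that the whole procedure becomes deterministic within the claimed bound. This combinatorial ingredient is precisely what~\cite{FominGLS16} packages, which is why the proposition is imported as a black box rather than reproved from scratch.
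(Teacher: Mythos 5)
The paper does not prove this proposition; it imports it as a black box from the monotone local search framework of Fomin et al.~\cite{FominGLS16}, so the only question is whether your sketch of that black box is sound. It is not, because you run the local search in the wrong direction. In~\cite{FominGLS16} one samples a set $X$ of size $t\leq k^*$ hoping that $X\subseteq S^*$ (probability $\binom{k^*}{t}/\binom{n}{t}$), and the extension step asks for a solution $S\supseteq X$ with $|S\setminus X|\leq k^*-t$; for a vertex deletion problem this is literally the original parameterized problem on $G-X$ with budget $k^*-t$, so the hypothesis supplies the required $c^{k^*-t}n^{O(1)}$ oracle, and hereditariness is what makes the family of feasible deletion sets closed under supersets so that the framework applies. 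Your version instead samples supersets $X\supseteq S^*$ of size $s=k^*+d$ and needs to \emph{shrink} a feasible set: decide in time $c^{d}n^{O(1)}$ whether some feasible $Y\subseteq X$ has $|Y|\leq k^*$. The justification you give --- mark $V(G)\setminus X$ undeletable and ``run $A$ with parameter $d$'' --- does not produce this: with parameter $d$ the algorithm $A$ searches for a deletion set of size at most $d$, not at most $k^*=s-d$, and with the correct parameter $k^*$ it costs $c^{k^*}n^{O(1)}$, in which case the product $\bigl(\binom{n}{k^*}/\binom{s}{k^*}\bigr)\cdot c^{k^*}\geq c^{k^*}$ can exceed $(2-\frac{1}{c})^{n}$ (take $k^*$ close to $n$). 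The ``un-deletion'' problem parameterized by the excess $d$ is a genuinely different problem about which the hypothesis says nothing, so the $c^{d}$ bound in your first step is unsupported and the subsequent optimization collapses.

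Two smaller points. First, the running time should be the \emph{minimum} over the chosen set size (the procedure is run for a single optimized value), not the sum you wrote, which is dominated by its largest term ($c^{n-k^*}$ at $s=n$) rather than by the balanced one. Second, your probability computation and the appeal to an explicit covering family for derandomization are both fine and match~\cite{FominGLS16}; once the sampling direction is flipped as above, the rest of your outline becomes the standard argument.
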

The above proposition immediately implies the following theorem.
\begin{theorem}
	There exists an algorithm for \textsf{BTFVS} running in $1.3820^n$ time.
\end{theorem}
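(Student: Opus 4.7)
The plan is to simply invoke the black-box reduction of Fomin et al.\ (the Proposition immediately preceding this theorem) with the parameterized algorithm from the previous theorem. For this I need to check two things: first, that \textsf{BTFVS} is indeed a vertex deletion problem into a hereditary graph class, and second, that the arithmetic works out to $1.3820^n$.

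For the hereditarity check, I would observe that the target graph class is the class of acyclic bipartite tournaments, or equivalently, by Lemma~\ref{triangle}, the class of bipartite tournaments with no directed square. This class is closed under vertex deletion: removing a vertex from an acyclic bipartite tournament cannot create a cycle, since any cycle in the remaining graph would already have existed. Hence the class is hereditary, and \textsf{BTFVS} fits into the framework of the Proposition.

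Next I would apply the Proposition with $c = 1.6181$, the constant from the preceding theorem giving a parameterized algorithm of running time $1.6181^k + n^{O(1)}$, which is clearly of the form $c^k n^{O(1)}$. The conclusion of the Proposition yields an exact exponential-time algorithm with running time
\[
\left(2 - \frac{1}{1.6181}\right)^{n + o(n)} n^{O(1)}.
\]
A short calculation gives $2 - 1/1.6181 < 1.38195$, so this quantity is bounded by $1.3820^n$ for all sufficiently large $n$, completing the proof.

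I do not anticipate any obstacles here, since the Proposition is stated precisely to enable this kind of one-line transfer from the parameterized to the exact exponential setting; the only content is verifying the hereditarity hypothesis and checking the numerical bound.
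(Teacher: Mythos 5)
Your proposal is correct and matches the paper's approach exactly: the paper likewise derives the theorem as an immediate consequence of the quoted Proposition applied to the $1.6181^k + n^{O(1)}$ parameterized algorithm, with the same arithmetic $2 - 1/1.6181 < 1.3820$. Your explicit verification that acyclic bipartite tournaments form a hereditary class is a detail the paper leaves implicit, but it is the same one-line transfer.
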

%------------------------------------------------------------------

%%%%%%%%%%%%%%%%%%%%%%%%%%%%%%%%%%%%%%%%%%%%%%%%%%%%%%%%
%%%%%%%%%%%%%%%%%%%%%%%%%%%%%%%%%%%%%%%%%%%%%%%%%%%%%%%%
%%%%%%%%%%%%%%%%%%%%%%%%%%%%%%%%%%%%%%%%%%%%%%%%%%%%%%%%

\subparagraph*{Acknowledgements} 
The research leading to these results has received funding from the European Research Council under the European Union's Seventh Framework Programme (FP7/2007-2013) / ERC grant agreement no. 306992 and the Beating Hardness by Pre-processing grant funded by the Bergen Research Foundation.
%-------------------------------------------------
\appendix
\nocite{DBLP:journals/ipl/GutinJ14}
\bibliography{110-Kumar}{}
%!tex root=btfvs.tex
\section*{Appendix}%\todo{permute lemma order}
In this paper, we work with graphs that do not contain any self loops. 
A {\em multigraph} is a graph that may contain more than one edge between the same pair of vertices. A graph is {\em mixed} if it can contain both directed and undirected edges. We will be working with mixed multigraphs; graphs that contain both directed and undirected edges, and where two vertices may have several edges between them. 

When working with a mixed multigraph $G$ we use $V(G)$ to denote the vertex set, $E(G)$ to denote the set of directed edges, and $\mathcal{E}(G)$ to denote the set of undirected edges of $G$. A directed edge from $u$ to $v$ is denoted by $uv$.  

\smallskip
\noindent
{\bf Graph Notation.} In a directed graph $D$, the set of \emph{out-neighbors} of a vertex $v$ is defined as $N^+(v):=\{u|vu\in E(D)\}$. Similarly, the set of \emph{in-neighbors} of a vertex $v$ is defined as $N^-(v):=\{u|uv\in E(D)\}$. A \emph{square} in a directed graph is a directed cycle of length $4$. Note that in this paper, whenever the term square is used it refers to a directed square. A pair of vertices $u,v$ are called \emph{false twins} if $uv\notin E(D),vu\notin E(D)$ and $N^+(u)=N^+(v), N^-(u)=N^-(v)$. A \emph{topological sort} of a directed graph $D$ is a permutation $\pi:V(D)\to [n]$ of the vertices of the graph such that for all edges $uv\in E(D)$, $\pi(u)<\pi(v)$. Such a permutation exists for a directed graph if and only if the directed graph is acyclic. For an acyclic tournament, the topological sort is unique. For an acyclic bipartite tournament, the topological sort is unique up to permutation of false twins. 

For a graph or multigraph $G$ and vertex $v$, $G - v$ denotes the graph obtained from $G$ by deleting $v$ and all edges incident to $v$. For a vertex set $S$, $G - S$ denotes the graph obtained from $G$ by deleting all vertices in $S$ and all edges incident to them.

For any set of edges $C$ (directed or undirected) and set of vertices $X$, the set $V_C(X)$ represents the subset of vertices of $X$ which are incident on an edge in $C$. For a vertex $v\in V(G)$, the set $N_C(v)$ represents the set of vertices $w\in V(G)$ such that there is an undirected edge $wv\in C$. We define a vertex cover $S$ for a set of edges $F$ to be a set of endpoints of $F$ such that every edge has at least one endpoint in $S$. For a bipartite graph $T:=(A,B,E)$, $A$ and $B$ are called the partite sets of $T$.

\smallskip
\noindent
{\bf Fixed Parameter Tractability.} A {\em parameterized problem} $\Pi$ is a subset of $\Sigma^* \times \mathbb{N}$. A parameterized problem $\Pi$ is said to be \emph{fixed parameter tractable}(\textsc{FPT}) if there exists an algorithm that takes as input an instance $(I, k)$ and decides whether $(I, k) \in \Pi$ in time $f(k)\cdot n^c$, where $n$ is the length of the string $I$, $f(k)$ is a computable function depending only on $k$ and $c$ is a constant independent of $n$ and $k$. 

A \emph{kernel} for a parameterized problem $\Pi$ is an algorithm that given an instance $(T,k)$ runs in time polynomial in $|T|$, and outputs an instance $(T',k')$ such that $|T'|,k' \leq g(k)$ for a computable function $g$ and $(T,k) \in \Pi$ if and only if $(T',k') \in \Pi$. For a comprehensive introduction to \textsc{FPT} algorithms and kernels, we refer to the book by Cygan et al.~\cite{pc_book}.

\smallskip
\noindent
\textbf{Lemma~\ref{undeletable} (Restated)}
{\em There exists an algorithm that given a \textsf{BTFVS} instance $(T,k)$ outputs in time $\gamma$, a family $\mathcal{C}:=\{(T,M_1,P_1,\emptyset,k),(T,M_2,P_2,\emptyset,k),\dots\}$ of size $\gamma$ of CFVS instances for $\gamma=2^{O(\frac{k}{\log k})}$ such that
\begin{itemize}
	\item if $(T,k)$ has a feedback vertex set $H$ of size at most $k$, then $\mathcal{C}$ has a CFVS instance $(T,M,P,\emptyset,k)$ that has an $M$-homogeneous solution of size at most $k$ and
	\item if $\mathcal{C}$ has a $(M,P,\emptyset)$-constrained solution, then $(T,k)$ has a feedback vertex set of size at most $k$.
\end{itemize}} 
\begin{proof}
Given $(T,k)$, we use the algorithm of Lemma \ref{undeletable1} with $T,k$ as input and obtain the family of sets $\mathcal{M}$. For each set $M\in \mathcal{M}$, we add a CFVS instance $(T,M,P,\emptyset,k)$ in $\mathcal{C}$ where $P$ is the set of vertices in $V(T)\setminus M$ that form a cycle of length $4$ with $M$.
 For the forward direction, if $(T,k)$ has a solution $H$ of size at most $k$, then by Lemma \ref{undeletable1}, there exists a set $M\in \mathcal{M}$ such that $H$ is $M$-homogeneous. Since, $P$ is the set of vertices that form a cycle with $M$ and $M\cap H=\emptyset$, $P\subseteq H$. Hence, $(T,M,P,\emptyset,k)\in \mathcal{C}$ has an $M$-homogeneous solution. The backward direction immediately follows from the construction of $\mathcal{C}$.
\end{proof}
\textbf{Lemma~\ref{longback} (Restated)}
{ \em Any feedback vertex set disjoint from $M$ must contain at least one end point of a long back edge.}
%\end{lemma}
\begin{proof}
	Let $u_ju_i$ be a long back edge and $i<j$ and $u_i\in X_i,u_j\in X_j$. Then, there are two vertices $u_l\in X_l$ and $u_{l+1}\in X_{l+1}$ in $M$ such that $i<l<l+1<j$. This creates the cycle $u_iu_lu_{l+1}u_ju_i$. Since $u_l$ and $u_{l+1}$ are undeletable, the feedback vertex set must contain at least one of $u_i$ and $u_j$.
	
	Now, consider the case when $u_i\in Y_i$ and $u_j\in X_j$. Since $u_i$ is $(M,X_i)$-conflicting, there is an out neighbor $u\in X_i$ of $u_i$. Again, since $j-i\geq 2$, there is a set $X_l$ such that $i<l<j$ and we get a cycle $u_iuu_lu_ju$ where $u_l\in X_l$. The case when $u_i\in Y_i$ and $u_j\in Y_j$ is similar. 
\end{proof}

Let $T$ be a bipartite tournament such that $T-P$ is $M$-consistent for some sets $M,P\subseteq V(T)$. Let $\mathcal{L}'$ be a function such that given an $M$-consistent bipartite tournament $T$ for some set $M\subseteq V(T)$ and an integer $k$, outputs the set of short back edges in the $M$-sequence $(X_1,Y_1,\dots)$ of $T$ which is the union of all sets of back edges $E_{i,i+1}$ between $X_i\cup Y_i$ and $X_{i+1}\cup Y_{i+1}$ such that the size of matching in the bipartite graph $(X_i\cup Y_{i+1},X_{i+1}\cup Y_i,E_{i,i+1})$ is at least $201\log^{8} k$. Let \textsf{long}$(T,M,P)$ denote the set of long back edges in $T-P$.

\medskip
\noindent
\textbf{Lemma~\ref{short} (Restated)}
{\em	There exists a $\gamma$-reduction from a CFVS instance $(T,M,P,\emptyset,k)$ to a family $\mathcal{C}_2=\{(T,M,P,F_1,k),(T,M,P,F_2,k)\dots\}$ for $\gamma=2^{O(\frac{k}{\log k})}$ such that every instance in $\mathcal{C}_2$ is weakly-coupled.}
%\end{lemma}
\begin{proof}
We construct $\mathcal{C}_2$ as follows. For each $B\subseteq \mathcal{L}'(T-P,M,k)$ such that $|B|\leq \frac{2k}{\log^2 k}$ output a set $F_B:=\mathcal{L}'(T,M,k)\setminus B\cup$\textsf{long}$(T,M,P)$. For each set $F_B$, add the instance $(T,M,P,F_B,k)$ in $\mathcal{C}_2$ if $(T,M,P,F_B,k)$ weakly-coupled. 

By the definition of $\gamma$-reduction and the construction of $\mathcal{C}_1$, the backward direction is trivial. Now we consider the forward direction. Let $H$ be an $M$-homogeneous $(M,P,\emptyset,k)$-constrained solution for $(T,M,P,\emptyset,k)$ and $(X_1,Y_1,\dots)$ be the $M$-sequence of $T-P$. It is sufficient to show that there is a CFVS instance $(T,M,P,F,k)$ such that $H$ is a vertex cover of $F$. A pair of consecutive blocks is said to have \emph{large back} edge matching if the size of a matching in the set of back edges between them is at least $201\log^{8} k$. Fix a permutation $\sigma$ of $T-H$ and choose any permutation $\sigma'$ of $T-P$ such that $\sigma'_{T-H}$ is $\sigma$. By Lemma \ref{simple}, we have that at most $200\log^6k$ edges are simple back edges between any pair of consecutive blocks in the $M$-sequence of $T-H$. Rest of the back edges are conflict back edges and must be hit by $H$. If the short back edge matching is large between a pair of blocks, then at least $201\log^{8} k-200\log^{6}k\geq 200\log^{8}k$ of them are conflict back edges. Hence, the number of set pairs with large short edge matching can be at most $\frac{k}{200\log^{8}k}$. This implies that at most $2\times\frac{k}{200\log^{8}k}\times 200\log^{6}k=\frac{2k}{\log^2 k}$ edges are simple back edges. Since the algorithm loops over all choices of subsets $B\subseteq \mathcal{L}'(T,M,k)$, $|B|\leq \frac{2k}{\log^2 k}$, $\mathcal{C}_2$ contains an instance with the required properties. 

Moreover, $|\mathcal{C}_2|$ is bounded by the number of subsets $B$. Now $|\mathcal{L}'(T,M,k)|\leq |V(T)|^2$ which implies the number of subsets $B$ is at most $(k^6)^{\frac{2k}{\log^2k}}=2^{6\log k\times \frac{2k}{\log^2k}}=2^{O(\frac{k}{\log k})}$. Hence, $|\mathcal{C}_2|=2^{O(\frac{k}{\log k})}$
\end{proof}
\textbf{Lemma~\ref{disjoint} (Restated)}
{\em There exists a $\gamma$-reduction from a weakly-coupled CFVS instance $(T,M,P,F,k)$ to $\mathcal{C}_3:=\{(T,M,P_1,F,k),(T,M,P_2,F,k),\dots\}$ for $\gamma\leq 1.6181^k$ such that  $\mathcal{C}_3$ is weakly-coupled and matched. In addition, for each $|P_i|=s\leq k$, $\mathcal{C}_3$ has at most $1.618^{s}$ CFVS instances.}
%\end{lemma}
\begin{proof}
We construct the family $\mathcal{C}_3$ using a branching algorithm. Consider the graph $G:=(V(T)\setminus P,F\cap E(T-P))$. Start with $k'=k$, $P':=P$ and $F':=E(G)$. In each branch node, the sets $P',F'$ are updated and finally for each leaf node in the branch tree, the corresponding instance $(T,M,P',F,k)$ is returned. As long as there is a vertex $v\in V(G)$ of degree at least $2$ and $k'>0$, branch by considering both the possibilities: $v\in H$ or $v\notin H$. In the branch in which $v$ is picked, decrease $k'$ by 1 and update $P'=P'\cup \{v\}$ and $F'=F'\setminus E(v)$. In the other branch, $N(v)$ is added to $P'$, $E(N(v))$ is removed from $F'$ and $k'$ is decreased by $|E(N(v))|$. The algorithm stops branching further in a branch in which either $k'<0$ or $k'>0$ and for every vertex $v$, degree of $v$ is at most 1. In the case that $k'<0$ or $|F'|> k$, the algorithm terminates the branch without returning any instance and moves on to other branches. Any returned instance $(T,M,P',F,k)$ is added to $\mathcal{C}_3$ if the instance is regular, weakly-coupled and matched. 
	
	Again, the definition of the $\gamma$-reduction and above construction of $\mathcal{C}_3$, ensures the backward direction. Now we consider the forward direction. Let $(T,M,P,F,k)$ be weakly-coupled and $H$ be an $M$-homogeneous $(M,P,F)$-constrained solution of $T$. Since, the above branching algorithm adds an instance into $\mathcal{C}_3$ with $P'$ containing $P$ such that $F\cup E(T-P')$ forms a matching, if $\mathcal{C}_3$ is non-empty, all instances in it are weakly-coupled and matched. Since $H$ hits $F$, there is a subset $\mathcal{P''}$ of $H$, such that $F$ forms a matching in $T-(P\cup P'')$. Since the above algorithm via branching considers all possible subsets $P'$ containing $P$ that \emph{make} $F$ disjoint in some branch $P'\subseteq P$ implying that $\mathcal{C}_3$ contains an $M$-homogeneous $(M,P',F)$-constrained solution.
	
	Now we argue about $\gamma$ and the number of instances. Let $s$ denote the size of $P'$ in any instance $(T,M,P',F,k)$. Since, $H$ must hit $F'$ and $F'$ are disjoint $s\leq k$. As $P'\subseteq H$, $|F'|\leq k-s$. The recurrence relation for bounding the number of leaves with $|F'|=k-s$ in the branch tree of the above algorithm is given by:
	$$g_s(k)\leq g_s(k-1)+g_s(k-2)$$
	which solves to $g_s(k)\leq 1.618^{s}$ as $g_s(k)\leq 1$ for $k=s$.
\end{proof}
\textbf{Lemma~\ref{blocks} (Restated)}
{\em	There exists a $\gamma$-reduction from a CFVS instance $(T,M,P,F,k)$ to a family $\mathcal{C}_1:=\{(T,M,P_1,F,k),(T,M,P_2,F,k),\dots\}$ of CFVS instances for $\gamma=2^{O(\frac{k}{\log k})}$ such that every instance in $\mathcal{C}_1$ is regular.}
%\end{lemma}
\begin{proof} 
We construct $\mathcal{C}_1$ as follows. Compute the sets $\mathcal{L}(T,M,k)$. For each $B\subseteq \mathcal{L}(T,M,k)$ such that $|B|\leq \frac{2k}{\log^2 k}$, output a pair of sets $(M, P')=(M, P\cup \mathcal{L}(T,M,k)\setminus B)$. For each pair $(M,P')$, add a CFVS instance $(T,M,P',\emptyset,k)$ in $\mathcal{C}_1$ if $(T,M,P',\emptyset,k)$ is regular. 

By the definition of $\gamma$-reduction and the construction of $\mathcal{C}_1$, the backward direction is trivial. For the forward direction, let $H$ be an $M$-homogeneous $(M,P,F)$-CFVS solution of $(T,M,P,F,k)$. 

A set $X_i$ in the $M$-sequence of $T-P$ is called \emph{large} if the ratio $\frac{|X_i|}{m_i}$ is at least $10\log^5k$. Similarly, $Y_i$ is large if $|Y_i|\geq 10\log^5 k$. From each large set $X_i$ at least $10m_i\log^5k-10m_i\log^3k$ vertices belong to $H$. Similarly, from each large $Y_i$, at least $10\log^5 k-10\log^3 k$ belongs to $H$.  Hence, if $t$ is the total number of $M$-vertices in the union of large sets, then in total at most $\frac{k}{10t\log^5k-10t\log^3k}\times 10t\log^3k\leq \frac{2k}{\log^2k}$ vertices from the union of \emph{large} sets in the $M$-sequence of $T-P$ do not belong to $H$.  Since the algorithm loops over all choices of subsets $B\subseteq \mathcal{L}(T,M,k)$, $|B|\leq \frac{2k}{\log^2 k}$, $\mathcal{C}_1$ contains an instance $(T,M,P',F,k)$ satisfying the required properties.

Moreover, $|\mathcal{C}_1|$ is bounded by the number of subsets $B$. Now $|\mathcal{L}(T,M,k)|\leq |V(T)|$ which implies the number of subsets $B$ is at most $(k^3)^{\frac{2k}{\log^2k}}=2^{3\log k\times \frac{k}{\log^2k}}=2^{O(\frac{k}{\log k})}$. 
\end{proof}
\textbf{Lemma~\ref{reduction} (Restated)}
{\em	There is a $\gamma$-reduction from a BTFVS instance $(T,k)$  to a CFVS family $\mathcal{C}'$ for $\gamma\leq 1.6181^k$ such that every instance in $\mathcal{C}'$ is regular, weakly-coupled, matched and LowBlockDegree. In addition, for each $|P_2|=s\leq k$, $\mathcal{C}'$ has at most $1.618^{s}$ CFVS instances.}
%\end{lemma}
\begin{proof}
  Given the \textsf{BTFVS} instance $(T,k)$, run the algorithm of Lemma \ref{undeletable} to obtain the CFVS family $\mathcal{C}$. For each instance $(T,M,P,\emptyset,k)\in \mathcal{C}$, run the algorithm of Lemma \ref{blocks} to obtain the CFVS family $\mathcal{C}_1$. For each instance $(T,M,P,\emptyset,k)\in \mathcal{C}_1$, run the algorithm of Lemma \ref{short} to obtain the CFVS family $\mathcal{C}_2$. For each instance $(T,M,P,F,k)\in \mathcal{C}_2$, run the algorithm of Lemma \ref{disjoint} to obtain the CFVS family $\mathcal{C}_3$. For each instance $(T,M,P,F,k)\in \mathcal{C}_3$, run the algorithm of Lemma \ref{disjointmatch} to obtain the CFVS family $\mathcal{C}_5(T,M,P_2,F,k)$. $\mathcal{C}'$ is the union of these families.
	
	The correctness and the runtime follow from Lemmas \ref{undeletable}, \ref{blocks}, \ref{short}, \ref{disjoint} and \ref{disjointmatch}.
\end{proof}

\end{document}